\documentclass[10pt,a4paper,runningheads,envcountsame]{llncs}
\usepackage%[a4paper,margin=2.5cm]
{geometry}
\usepackage{graphicx,amssymb,amsmath}
\usepackage{physics}
\usepackage{enumitem}
\usepackage{tikz}

\usepackage{mathdots}
\usepackage{yhmath}
\usepackage{cancel}
\usepackage{color}
\usepackage{siunitx}
\usepackage{array}
\usepackage{multirow}
\usepackage{gensymb}
\usepackage{tabularx}
\usepackage{extarrows}
\usepackage{booktabs}
\usetikzlibrary{fadings}
\usetikzlibrary{patterns}
\usetikzlibrary{shadows.blur}
\usetikzlibrary{shapes}
\usepackage{xspace}
\usepackage{caption}
\usepackage{subcaption}
\usepackage{amsmath}
\usepackage{refcount}
\usepackage{hyperref}
\hypersetup{hidelinks,colorlinks=true,allcolors=black,pdfstartview=Fit,breaklinks=true}
\input{figs}
\title{Geometric Burning Under \(L_1\) and \(L_\infty\) Metrics, and Beyond}

\author{Shahin Kamali \and Saba Yazdani}

\institute{
Department of Electrical Engineering and Computer Science \\
York University, Toronto, Canada\\
\email{\{kamalis,yazdanis\}@yorku.ca}
}

\authorrunning{S. Kamali and S. Yazdani}

\makeatletter

\spnewtheorem{observation}[theorem]{Observation}{\bfseries}{\itshape}

\makeatother

\date{}

\begin{document}
\thispagestyle{empty}
\maketitle
\date{}

\begin{abstract}
Burning is a discrete-time model for propagation in which a new fire starts in each round, while each existing fire expands by one unit of distance along the underlying metric. In geometric burning, the input is a finite point set, and the goal is to burn all points in as few rounds as possible. Equivalently, burning a point set in $k$ rounds corresponds to covering it with metric balls of distinct radii in $\{0,1,\ldots,k-1\}$; the objective is to minimize $k$. Previous work has studied the problem mainly under the Euclidean metric. In this paper, we study geometric burning under the $L_1$ and $L_\infty$
metrics. The problem remains NP-hard in both settings.

The $L_1$ and $L_\infty$ metrics provide additional geometric structure, which allows us to obtain improved approximation guarantees, especially for anywhere burning. We first present a simple $(2+\varepsilon)$-approximation for both anywhere burning and point burning. We then improve the anywhere burning approximation to $7/4+\varepsilon=1.75+\varepsilon$, and give a
$(3151/1620+\varepsilon)$-approximation for point burning, where
$3151/1620<1.9451$. We also extend the anywhere burning result under
$L_\infty$ to every fixed dimension $d\ge 3$ to achieve a $\left(2-\frac{1}{2^{d+1}}+\varepsilon\right)$-approximation. Finally, using standard comparisons between planar $L_p$
distances, we transfer our $L_1$ and $L_\infty$ algorithms, together with
known Euclidean burning algorithms, to obtain approximation guarantees for
every fixed $1\le p\le\infty$.

\keywords{Geometric burning, approximation algorithms, $L_1$ metric,
$L_\infty$ metric, $L_p$ metrics, distinct-radii covering, geometric set
cover, point burning, anywhere burning}
\end{abstract}

% \begin{abstract}
% Burning is a discrete-time model for propagation, in which a new fire starts in each round, while existing fires continue to expand. In geometric burning, the input is a finite point set, and the goal is to burn all points in a minimum number of rounds. Equivalently, burning a point set in $k$ rounds corresponds to covering it with metric balls of distinct radii in $\{0,1,\ldots,k-1\}$. Previous work has studied the problem mainly under the Euclidean metric. In this paper, we study geometric burning under the $L_1$ and $L_\infty$ metrics. 
% The problem remains NP-hard in both settings. These metrics provide additional geometric structure, which allows us to obtain improved approximation guarantees, especially for anywhere burning. We first present a simple $(2+\varepsilon)$-approximation for both anywhere burning and point burning, then improve the anywhere burning approximation to $7/4+\varepsilon=1.75+\varepsilon$, and also give a $(3151/1620+\varepsilon)$-approximation for point burning, where $3151/1620<1.9451$. We also extend the anywhere burning result to fixed dimension $d$, obtaining a $(2-2^{-d}+\varepsilon)$-approximation.
% \keywords{Geometric burning, approximation algorithms, \(L_1\) metric, \(L_\infty\) metric, \(L_p\) metrics, distinct-radii covering, geometric set cover, point burning, anywhere burning}
% \end{abstract}

%\scalebox{.8}{\invisbox}
%\newpage
%\setcounter{page}{1}

\section{Introduction}
Burning is a discrete-time spreading process in which new \emph{fires} are activated over time while older fires continue to expand. The process was first studied on graphs: given an undirected, connected graph $G$, at each round, one new vertex is chosen as the source of a new fire, while old fire spreads from each burned vertex to its neighbors. The goal is to burn the entire graph as quickly as possible~\cite{BonatoSuv21}.

The same idea can be formulated in any metric space. Let $(X,d)$ be a metric space, and let $\mathcal{P}\subseteq X$ be a finite set of points. In each round, a new source is selected, while all existing fires expand by one unit according to the metric $d$; that is, any point $p\in\mathcal{P}$ that lies within distance $1$ of a burned point becomes burned in the current round. In graph burning, $X$ is the vertex set of a graph and $d$ is the shortest-path metric. In geometric burning, $X$ is typically $\mathbb{R}^2$ equipped with a geometric metric, and $\mathcal{P}\subset X$ is the finite input point set.

We study geometric burning under the Manhattan $(L_1)$ and Chebyshev $(L_\infty)$ metrics. For two points $p=(x_p,y_p)$ and $q=(x_q,y_q)$, these distances are defined by $d_1(p,q)=|x_p-x_q|+|y_p-y_q|$ and $d_\infty(p,q)=\max\{|x_p-x_q|,|y_p-y_q|\}$, respectively. Under $L_1$, metric balls are diamonds, while under $L_\infty$, metric balls are axis-aligned squares. These two settings are closely related, and the main geometric arguments can be stated for axis-aligned squares and then transferred to the $L_1$ metric.

We consider two variants. In the \emph{anywhere burning} problem, sources may be placed anywhere in the ambient space. In the \emph{point burning} problem, sources must be selected from the input set $\mathcal{P}$. That is, point burning restricts sources to input points, while anywhere burning allows arbitrary centers.

A burning sequence of length $k$ can be viewed as a covering of points by metric balls: the source chosen in round $i$ has radius $k-i$ by the end of the process. Thus, the problem is equivalent to covering $\mathcal{P}$ with balls of distinct radii in $\{0,1,\ldots,k-1\}$, where centers are arbitrary in the anywhere variant and restricted to $\mathcal{P}$ in the point variant (See Figure~\ref{fig:pointbb}). %Figure~\ref{fig:pointbb} illustrates the anywhere and point burning models.
For Euclidean burning, the covering objects are disks. In contrast, the $L_1$ and $L_\infty$ metrics lead to diamonds and squares. This difference is useful algorithmically. In particular, the square structure allows us to use smaller-radius balls more efficiently. % than the standard approach, which first covers the input with large congruent balls.

\subsection{Previous Work}
Burning was first introduced on graphs as a model for the spread of influence through a network~\cite{burn2}. Since then, the graph-burning problem has been studied for several graph classes from both complexity and approximation perspectives. It is NP-hard even for restricted graph families such as trees and disjoint unions of paths~\cite{BessyBJRR17}. For general graphs, a $3$-approximation algorithm is known~\cite{BessyBJRR17,graph1}, and %this bound can be slightly improved to $3-1/b(G)$, where $b(G)$ denotes the burning number of the input graph~\cite{garcia2022burning}. P
polynomial-time approximation schemes (PTASs) are known for disjoint path forests~\cite{graph1} %, trees, 
and graphs of bounded treewidth~\cite{lieskovsky2022graph}. Other graph families, including interval graphs, have also been considered~\cite{graph3}. We refer to the survey of Bonato~\cite{BonatoSuv21} for a broader overview of graph-burning results.

Geometric variants of graph burning have also been studied. % in other settings. For geometric graph classes, 
Gorain et al. gave a $2$-approximation algorithm for square grids and proved NP-completeness for connected interval graphs~\cite{new4}. Bonato et al. considered burning on growing grid graphs~\cite{pla1}, while Evans and Lin studied polygon burning and presented a $3$-approximation algorithm for polygonal domains~\cite{evans2022polygon}.

%For finite point sets in the Euclidean plane, 
Keil et al. introduced the anywhere and point burning problems %for finite point sets 
in the Euclidean plane and proved them to be NP-hard~\cite{keil2022burning}. They also gave $(2+\varepsilon)$-approximation algorithms for both variants~\cite{keil2022burning}. Gokhale, Keil, and Mondal later improved these approximation factors to $1.92188+\varepsilon$ for anywhere burning and $1.96296+\varepsilon$ for point burning~\cite{gokhale2023improved}. Kamali and Shabanijou further improved the Euclidean bounds to $1.8\bar{33}+\varepsilon$ for anywhere burning and $1.944+\varepsilon$ for point burning~\cite{kamali2023improved}.

% \begin{figure}[!t]
% \begin{subfigure}[b]{.24\textwidth}\centering
% \hspace*{-.1cm}\scalebox{1.4}{\anywhereOne}
% \caption{Anywhere burning\\under $L_1$}\label{fig:anywhere-l1}\end{subfigure}%
% \begin{subfigure}[b]{.23\textwidth}\centering
% \hspace*{-.4cm}\scalebox{1.4}{\pointOne}\caption{Point burning\\under $L_1$}\label{fig:point-l1}\end{subfigure}%
% \hfill
% \begin{subfigure}[b]{.24\textwidth}\centering
% \hspace*{-.05cm}\scalebox{1.4}{\anywhereInf}\caption{Anywhere burning\\under $L_\infty$}\label{fig:anywhere-linf}\end{subfigure}%
% \begin{subfigure}[b]{.23\textwidth}\centering
% \scalebox{1.4}{\pointInf}\caption{Point burning\\under $L_\infty$}\label{fig:point-linf}\end{subfigure}
% \caption{An illustration of the burning protocols. Here, $c_i$ indicates the $i$th source in the burning sequence. In this example, anywhere burning and point burning take $5$ and $6$ rounds, respectively.}\label{fig:point}
% \end{figure}
\begin{figure}[!t]
\centering
\resizebox{\textwidth}{!}{\burningOverview}
\caption{An illustration of the burning protocols. Here, $c_i$ indicates the $i$th source in the burning sequence. In this example, anywhere burning and point burning take $4$ and $5$ rounds, respectively.}\label{fig:pointbb}
\end{figure}

\subsection{Contribution}
This paper studies geometric burning under the $L_1$ and $L_\infty$ metrics. %The hardness proof of Keil et al.~\cite{keil2022burning} already applies to one-dimensional instances, where input points are collinear. %Since such instances can be embedded in the plane under both the $L_1$ and $L_\infty$ metrics, the same reduction implies NP-hardness in both settings.
Since the hardness construction of Keil et al.~\cite{keil2022burning} already applies to collinear point sets, it immediately implies that geometric burning is NP-hard under both the $L_1$ and $L_\infty$ metrics.
% \begin{proposition}\label{prop:hardness}
% The geometric burning problem is NP-hard under both $L_1$ and $L_\infty$ metrics.
% \end{proposition}
We therefore focus on approximation algorithms. Our results are as follows.

\begin{itemize}
\item We formulate planar burning under \(L_1\) and \(L_\infty\) as a
distinct-radii square-cover problem; see Definition~\ref{def:drsc}. Under
\(L_\infty\), metric balls are axis-aligned squares, and under \(L_1\), the transformation \(\phi(x,y)=(x+y,x-y)\) maps balls to axis-aligned squares of the same radius. Thus, it suffices to work with square covers throughout the main planar part of the paper.

\item \textbf{Basic Approximation.} We present a basic \((2+\varepsilon)\)-approximation algorithm for both anywhere burning and point burning under \(L_\infty\) and \(L_1\); see Theorem~\ref{thm:two-approx}. For anywhere burning, where square centers are unrestricted, we use the PTAS of Hochbaum and Maass~\cite{HochbaumMaass85} for covering points by congruent axis-aligned squares. For point burning, where centers are restricted to \(\mathcal{P}\), we use the PTAS of Mustafa and Ray~\cite{mustafa2010improved} for the corresponding planar discrete-covering problem.

\item \textbf{Anywhere burning in the plane.} We improve the anywhere-burning approximation factor to
\(7/4+\varepsilon\) in the plane; see Theorem~\ref{thm:seven-fourths}. The improvement comes from using smaller-radius squares that are ignored by the basic approach. In particular, we show that the squares of radii in \(\{1,2,\ldots,g\}\) can be used to cover \(g/4\) squares of radius \(g\), and that this is best possible within this replacement framework.

\item \textbf{Higher-dimensional anywhere burning.} We extend the anywhere-burning result under \(L_\infty\) to fixed higher dimensions; see Theorem~\ref{thm:higher-d-anywhere}. For every fixed \(d\geq 3\), using the Hochbaum--Maass PTAS for covering points by congruent axis-aligned \(d\)-cubes in \(\mathbb{R}^d\), we achieve a
$\left(2-{1}/{2^{d+1}}+\varepsilon\right)$-approximation algorithm for anywhere burning under the \(L_\infty\) metric.

\item \textbf{Point burning in the plane.} For point burning in the plane, we improve the basic approximation by combining the planar discrete-covering PTAS of Mustafa and Ray with a point-centered replacement scheme; see Theorem~\ref{thm:point-burning}. This gives a $\left({3151}/{1620}+\varepsilon\right)$-approximation algorithm for point burning under both \(L_\infty\) and \(L_1\),
where \(3151/1620<1.9451\).

\item \textbf{Transfer to $L_p$ metrics.} Finally, we prove a black-box metric-transfer lemma for planar
\(L_p\) metrics; see Lemma~\ref{lem:norm-transfer}. The lemma shows that any
\((\alpha_q+\varepsilon)\)-approximation algorithm under \(L_q\) can be
converted into an $\left(\alpha_q\,2^{|1/p-1/q|}+\varepsilon\right)$-approximation algorithm under \(L_p\), for every fixed
\(1\le p,q\le\infty\). Applying this lemma with \(q\in\{1,2,\infty\}\), and combining our \(L_1\) and \(L_\infty\) algorithms with known Euclidean algorithms, gives approximation guarantees for both anywhere burning and point burning under every fixed planar \(L_p\) metric; see
Theorems~\ref{thm:lp-transfer} and~\ref{thm:lp-transfer-point}. These bounds interpolate between our square-based guarantees for \(L_1\) and
\(L_\infty\) and the known disk-based guarantees for \(L_2\).
\end{itemize}

Some proofs and routine details are deferred to the appendix due to space
constraints.

\section{Preliminaries}\label{sec:preliminaries}
\paragraph{Problem Definition.}
For a point $c$ in a metric space $(X,d)$ and a radius $r\geq 0$, let $B_d(c,r)=\{x\in X : d(x,c)\leq r\}$ denote the ball of radius $r$ centered at $c$.

Let $\mathcal{P}$ be a finite set of points in a metric space $(X,d)$. A burning sequence of length $k$ is a sequence of sources $c_1,\ldots,c_k$ such that $\mathcal{P}\subseteq \bigcup_{i=1}^k B_d(c_i,k-i)$. Thus, a burning sequence of length $k$ is equivalently a cover of $\mathcal{P}$ by balls with distinct radii in $\{0,1,\ldots,k-1\}$. In the \emph{anywhere burning} variant, the sources may be chosen from the ambient space $X$; in the \emph{point burning} variant, they must belong to $\mathcal{P}$.

We focus on the metrics $L_\infty$ and $L_1$ on $\mathbb{R}^2$, defined by
$d_\infty(p,q)=\max\{|x_p-x_q|,|y_p-y_q|\}$ and
$d_1(p,q)=|x_p-x_q|+|y_p-y_q|$.
When the metric is an $L_p$ metric $d_p$, we write $B_p(c,r)$ as shorthand for
$B_{d_p}(c,r)$. We use the following square-cover formulation.

\begin{definition}[Distinct-Radii Square Cover]\label{def:drsc}
Let $\mathcal{P}$ be a set of points in $\mathbb{R}^2$. A \emph{distinct-radii square cover} of length $k$ is a set of axis-aligned squares $B_\infty(c_0,0),B_\infty(c_1,1),\ldots,B_\infty(c_{k-1},k-1)$ such that $\mathcal{P}\subseteq \bigcup_{i=0}^{k-1} B_\infty(c_i,i)$. In \emph{anywhere covering}, the centers $c_i$ may be arbitrary points of $\mathbb{R}^2$. In \emph{point covering}, the centers must belong to $\mathcal{P}$. The objective is to find the minimum value of $k$ for which such a cover exists.
\end{definition}

The distinct-radii square-cover problem is exactly the covering formulation of burning under the $L_\infty$ metric: a source chosen in round $j$ of a $k$-round burning sequence has radius $k-j$ by the end of the process; after reindexing the sources by their final radii, we obtain squares of radii $0,1,\ldots,k-1$. Similarly, any distinct-radii square cover of length $k$ gives a burning sequence of length $k$ by activating the center of the radius-$(k-j)$ square in round $j$.
The same definition also captures burning under the $L_1$ metric. The transformation $\phi(x,y)=(x+y,x-y)$ satisfies $d_\infty(\phi(p),\phi(q))=d_1(p,q)$ for all $p,q\in\mathbb{R}^2$. Hence, applying $\phi$ maps $L_1$-balls to axis-aligned squares of the same radius. Therefore, an $L_1$ burning instance on $\mathcal{P}$ is equivalent to a distinct-radii square-cover instance on $\phi(\mathcal{P})$.

For this reason, in the rest of the paper, we work with the distinct-radii square-cover formulation. The results for $L_\infty$ follow directly, and the corresponding results for $L_1$ follow by applying the transformation $\phi$.

\subsection{Square-Covering Subroutines}\label{subsec:cover-subroutines}
Throughout the paper, %an axis-aligned square is a closed square region whose sides are parallel to the coordinate axes. We 
we specify an axis-aligned square by its $L_\infty$-radius: a square of radius $r$ centered at $c$ is the set $B_\infty(c,r)$ and has side length $2r$.

\begin{definition}[Congruent Square Cover]\label{def:csc}
An instance of the \emph{Congruent Square Cover} problem consists of a finite point set $\mathcal{P}\subset\mathbb{R}^2$ and a fixed radius $r>0$. The objective is to cover $\mathcal{P}$ with a minimum number of radius-$r$ axis-aligned squares whose centers may be arbitrary points of the plane.
\end{definition}

We use the following result of Hochbaum and Maass for the anywhere version. Their result applies more generally to covering points in fixed dimensions: % by rectilinear blocks of prescribed side lengths.

\begin{theorem}[Hochbaum and Maass~\cite{HochbaumMaass85}]
\label{thm:hochbaum-maass}
For every fixed dimension $d$ and every fixed $\varepsilon>0$, there is a polynomial-time algorithm that, given a finite point set in $\mathbb{R}^d$ and a fixed radius $r>0$, returns a cover by axis-aligned $d$-cubes of radius $r$ whose size is at most $(1+\varepsilon)$ times optimum. In particular, for $d=2$, there is a PTAS for covering points by congruent %axis-aligned 
squares.
\end{theorem}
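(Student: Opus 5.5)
This is the shifting-strategy PTAS of Hochbaum and Maass. The plan is to combine shifting with exhaustive search inside bounded cells. Fix $d$, $\varepsilon$ and $r$; by scaling, assume every cube has side $1$ (radius $1/2$). Pick an integer shifting parameter $\ell$ with $(1+1/\ell)^d\le 1+\varepsilon$, so $\ell=\Theta(d/\varepsilon)$.

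\emph{Step 1: partition.} For each coordinate $j$ and shift $s_j\in\{0,\dots,\ell-1\}$, cut $\mathbb{R}^d$ by hyperplanes $x_j = s_j + t\ell$ with $t\in\mathbb{Z}$. This gives $\ell^d$ shifted partitions into cells of side $\ell$.

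\emph{Step 2: solve each cell.} For a fixed partition, handle each nonempty cell independently and take the union of the cell solutions. A cell has side $\ell$, so it is covered by $\ell^d$ unit cubes, and an optimal local cover uses at most $\ell^d$ cubes. I would also show that each cube can be assumed to be in a canonical position: translate it until its lower faces touch input points in every coordinate. Then its position is determined by at most $d$ input points, which gives $n^{O(d)}$ candidate cubes per cell. Enumerating all subsets of size at most $\ell^d$ takes $n^{O(d\ell^d)}$ time, which is polynomial for fixed $d$ and $\varepsilon$.

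\emph{Step 3: shifting bound.} This is the main obstacle. Let $OPT$ be an optimal global cover. In a given partition, a unit cube of $OPT$ meets $2^{m}$ cells, where $m$ is the number of coordinates $j$ whose grid hyperplanes it crosses. Along coordinate $j$, the cube has side $1$, so it crosses a hyperplane for at most one of the $\ell$ shifts $s_j$. The cost of the partition is at most $\sum_{C\in OPT}$ (number of cells $C$ meets), because restricting $OPT$ to a cell is a feasible local cover. Average this cost over all $\ell^d$ shift vectors. The number of cells a cube meets factors as $\prod_j(1+[\text{cube crosses a hyperplane in coordinate } j])$, and the shifts in different coordinates are independent. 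So the average per cube is
\[
\prod_{j=1}^d\Bigl(1+\tfrac{1}{\ell}\Bigr)=(1+1/\ell)^d\le 1+\varepsilon .
\]
Hence some shift vector gives total cost at most $(1+\varepsilon)|OPT|$. The algorithm tries all $\ell^d$ shifts and returns the best one, which completes the argument. The delicate points are to treat boundary points consistently, assigning each point to exactly one cell via half-open cells, and to verify the canonical-position reduction in Step 2.
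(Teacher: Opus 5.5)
Your proposal is correct and is essentially the standard shifting-strategy proof from the Hochbaum--Maass paper: exhaustive search over canonically placed cubes within cells of side $\ell$, plus an averaging argument over the $\ell^d$ shift vectors that gives the factor $(1+1/\ell)^d\le 1+\varepsilon$; averaging over all coordinates at once is a harmless repackaging of their dimension-by-dimension shifting lemma. The paper itself gives no proof and cites the theorem as a black box, so there is nothing further to compare.
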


For the point version, centers are not arbitrary. We use the following discrete-center covering problem.

\begin{definition}[Discrete Congruent Square Cover]\label{def:dcsc}
An instance of the \emph{Discrete Congruent Square Cover} problem consists of a finite point set $\mathcal{P}\subset\mathbb{R}^2$, a finite set $C\subset\mathbb{R}^2$ of candidate centers, and a fixed radius $r>0$. The objective is to find a minimum-size subset $C'\subseteq C$ such that
$\mathcal{P}\subseteq \bigcup_{c\in C'} B_\infty(c,r)$.
That is, we want to cover all points of $\mathcal{P}$ using the minimum number of radius-$r$ axis-aligned squares centered at points of $C$.
\end{definition}

% We will also use the same problem with squares of an arbitrary fixed radius $r>0$. By scaling the plane by a factor of $1/r$, this version is equivalent to the unit-radius version. Thus, an approximation algorithm for unit squares also applies to congruent squares of any fixed radius.

When $r=1$, this problem is usually called \emph{Discrete Unit Square Cover}. By scaling the plane by a factor of $1/r$, the fixed-radius version is equivalent to the unit-radius version. %Thus, any PTAS for Discrete Unit Square Cover also applies to Discrete Congruent Square Cover.
Discrete Congruent Square Cover is a planar geometric set-cover problem with candidate centers and translates of a fixed square. The local-search PTAS of Mustafa and Ray~\cite{mustafa2010improved} applies to this setting and we use it as a black box in our %. We use the following consequence as a black box in the 
point burning algorithms.

\begin{theorem}[Mustafa and Ray~\cite{mustafa2010improved}]
\label{thm:mustafa-ray}
For every fixed $\varepsilon>0$, there is a polynomial-time algorithm for Discrete Congruent Square Cover in the plane that returns a cover with size at most $(1+\varepsilon)$ times optimum.
\end{theorem}

We use the two PTAS results in different settings. The Hochbaum--Maass PTAS applies directly to congruent axis-aligned cubes with unrestricted centers and extends to every fixed dimension; we use it for anywhere burning. The Mustafa--Ray PTAS is used for the planar discrete-center setting, where centers are restricted to a finite candidate set; this is the subproblem that arises in point burning.

\section{Basic $(2+\varepsilon)$-Approximation Algorithms}\label{sec:baseline}
In this section, we give a basic approximation algorithm for the distinct-radii square-cover problem. As discussed in Section~\ref{sec:preliminaries}, this gives the corresponding result for burning under $L_\infty$ and $L_1$. %We present the argument simultaneously for the anywhere and point versions.

Fix $\varepsilon>0$. For each integer guess $g\in \{1,2,\ldots,|\mathcal{P}|\}$, we solve the appropriate congruent-square covering subproblem with radius $g$. For the anywhere version, we apply the PTAS of Hochbaum and Maass (Theorem~\ref{thm:hochbaum-maass}). For the point version, we apply the PTAS of Mustafa and Ray (Theorem~\ref{thm:mustafa-ray}) to the discrete instance with candidate centers $C=\mathcal{P}$. In either case, let $U_g$ be the cover returned by the PTAS, and let $\tau(g)$ denote the minimum number of radius-$g$ squares for the corresponding covering subproblem. Then $|U_g|\leq (1+\varepsilon)\tau(g)$.

We stop at the first (smallest) value of $g$ such that $|U_g|/(1+\varepsilon)\leq g$. Such a value always exists, since for $g=|\mathcal{P}|$ one can cover $\mathcal{P}$ using at most $|\mathcal{P}|$ squares, one centered at each input point. Let $g^*$ be the first successful guess, and let $m=|U_{g^*}|$. By the stopping condition, $m\leq (1+\varepsilon)g^*$.

We now convert $U_{g^*}$ into a distinct-radii square cover. Let $\mathcal{C}$ be a cover of length $L=g^*+m$. The available radii in $\mathcal{C}$ are $0,1,\ldots,L-1$. For each square $B_\infty(c,g^*)$ in $U_{g^*}$, we include in $\mathcal{C}$ a square centered at the same point $c$, using one of the distinct radii $g^*,g^*+1,\ldots,g^*+m-1$. Since each assigned radius is at least $g^*$, the corresponding square in $\mathcal{C}$ covers the assigned radius-$g^*$ square in $U_{g^*}$. The remaining radii in $\mathcal{C}$ (with radii $<g^*$) play no role in the cover. Hence, $\mathcal{C}$ covers $\mathcal{P}$ and has length $L=g^*+m\leq (2+\varepsilon)g^*$.

It remains to relate $g^*$ to the optimum. Since $g^*$ is the first successful guess, the previous guess $g^*-1$ did not satisfy the stopping condition. Hence $|U_{g^*-1}|/(1+\varepsilon)>g^*-1$. %Since $|U_{g^*-1}|\leq (1+\varepsilon)\tau(g^*-1)$, i
It follows that 
$\tau(g^*-1)>g^*-1$. If the minimum distinct-radii square cover had length at most $g^*-1$, then by enlarging all its squares to radius $g^*-1$, we would obtain a cover of $\mathcal{P}$ by at most $g^*-1$ congruent radius-$(g^*-1)$ squares. In the point version, these centers belong to $\mathcal{P}$; in the anywhere version, the centers are unrestricted. This contradicts $\tau(g^*-1)>g^*-1$. Therefore, the optimum is at least $g^*$. %\\ We can conclude the following theorem.

\begin{theorem}\label{thm:two-approx}
For every fixed $\varepsilon>0$, there is a polynomial-time 
$(2+\varepsilon)$-approximation algorithm for both the anywhere and point versions of the distinct-radii square-cover problem. Equivalently, there is a polynomial-time $(2+\varepsilon)$-approximation algorithm for both anywhere burning and point burning under the $L_\infty$ and $L_1$ metrics.
\end{theorem}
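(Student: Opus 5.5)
The proof assembles the algorithm described just above the statement into three claims: it runs in polynomial time, its output is feasible, and its length is at most $(2+\varepsilon)$ times the optimum. The algorithm enumerates guesses $g\in\{1,\ldots,|\mathcal{P}|\}$. For each guess it calls a black-box PTAS with accuracy $\varepsilon$: Theorem~\ref{thm:hochbaum-maass} with $d=2$ in the anywhere version, and Theorem~\ref{thm:mustafa-ray} with $C=\mathcal{P}$ in the point version. It stops at the first $g^*$ with $|U_{g^*}|\le(1+\varepsilon)g^*$. Since there are at most $|\mathcal{P}|$ guesses and each PTAS call is polynomial for fixed $\varepsilon$, the running time is polynomial. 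The search always terminates: $g=|\mathcal{P}|$ succeeds because $\tau(|\mathcal{P}|)\le|\mathcal{P}|$ (one square per input point, centered at that point), and the PTAS guarantee then gives $|U_{|\mathcal{P}|}|\le(1+\varepsilon)|\mathcal{P}|$.

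For feasibility, I would assign the $m=|U_{g^*}|$ centers to the distinct radii $g^*,\ldots,g^*+m-1$. Each such square contains the corresponding radius-$g^*$ square of $U_{g^*}$. The unused radii $0,\ldots,g^*-1$ can be placed arbitrarily; in the point version I would center them at any input point, so that every center lies in $\mathcal{P}$. This yields a valid distinct-radii square cover of length $L=g^*+m\le(2+\varepsilon)g^*$.

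For the lower bound $\mathrm{OPT}\ge g^*$, first handle $g^*=1$, which is trivial since $\mathrm{OPT}\ge1$. For $g^*>1$, failure at $g^*-1$ gives $(1+\varepsilon)\tau(g^*-1)\ge|U_{g^*-1}|>(1+\varepsilon)(g^*-1)$, so $\tau(g^*-1)>g^*-1$. Suppose an optimal cover had length $k\le g^*-1$. Enlarging each of its $k$ squares to radius $g^*-1$ keeps the centers fixed, so in the point version they remain in $\mathcal{P}$, and the result is a congruent cover by at most $g^*-1$ squares. This contradicts $\tau(g^*-1)>g^*-1$, so $\mathrm{OPT}\ge g^*$ and the ratio bound $L/\mathrm{OPT}\le 2+\varepsilon$ follows. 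The burning equivalence, via reindexing by radii and the map $\phi$ with $d_\infty(\phi(p),\phi(q))=d_1(p,q)$, transfers the result to $L_\infty$ and $L_1$ burning. Under $\phi$, input points map to input points, so the point-version constraint is preserved.

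Two small technicalities remain, and I expect only the second to need care. First, Definition~\ref{def:csc} requires $r>0$, so guesses should start at $g=1$; the case $g^*=1$ is why the lower bound is split off separately above. Second, under the $L_1$ transfer the set $\phi(\mathbb{R}^2)$ is all of $\mathbb{R}^2$, because $\phi$ is a linear bijection. Hence anywhere centers found in the transformed instance pull back to valid centers via $\phi^{-1}$.
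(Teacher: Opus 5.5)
Your proposal is correct and follows the paper's argument: the same guess-and-stop rule with the Hochbaum--Maass or Mustafa--Ray PTAS, the same assignment of radii $g^*,\ldots,g^*+m-1$ to the returned squares, the same lower bound $\mathrm{OPT}\ge g^*$ by enlarging a hypothetical shorter cover to congruent radius-$(g^*-1)$ squares, and the same transfer to $L_1$ via $\phi$. Your separate handling of $g^*=1$, and your placement of the unused radii $0,\ldots,g^*-1$ at input points in the point version, fill in small details the paper leaves implicit.
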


\begin{proof}
The argument above gives a cover of length at most $(2+\varepsilon)g^*$, where $g^*$ is a lower bound on the optimum value. This proves the approximation guarantee for the distinct-radii square-cover formulation, and hence for burning under $L_\infty$. The result for $L_1$ follows by applying the transformation $\phi(x,y)=(x+y,x-y)$ from Section~\ref{sec:preliminaries}, solving the transformed instance under $L_\infty$, and mapping the centers back by $\phi^{-1}$. The algorithm considers at most $|\mathcal{P}|$ guesses for $g$, invokes a polynomial-time PTAS for each guess, and performs only polynomial-time operations otherwise; hence, for fixed $\varepsilon$, it runs in polynomial time.
\qed
\end{proof}
\begin{remark}[Basic algorithm in higher dimensions]\label{rem:basic-higher-d}
For anywhere burning, Theorem~\ref{thm:two-approx} extends verbatim to
\(L_\infty\) in every fixed dimension \(d\). One replaces squares by axis-aligned \(d\)-cubes and uses the Hochbaum--Maass PTAS of
Theorem~\ref{thm:hochbaum-maass} for covering points by congruent \(d\)-cubes.
The stopping rule and lower-bound argument from Section~\ref{sec:baseline} are
unchanged, giving a \((2+\varepsilon)\)-approximation for anywhere burning under
\(L_\infty\) in \(\mathbb{R}^d\).
On the other hand, the point version would
require an approximation scheme for the corresponding discrete-center
\(d\)-cube covering problem; see also Remark~\ref{remarkTwo}. Moreover, the
\(L_1\)-to-\(L_\infty\) transformation used in
Section~\ref{sec:preliminaries} is planar.
\end{remark}

\section{A $(7/4+\varepsilon)$-Approximation for Anywhere Burning}\label{sec:anywhere}
In this section, we improve the basic approximation algorithm for the anywhere version. The improvement comes from using the radii smaller than a guessed value $g$, which are ignored by the basic algorithm. We first prove the geometric covering lemma that allows us to use these smaller radii.

\begin{lemma}\label{lem:quarter-cover}
Let $g$ be a positive integer divisible by $4$. Suppose we have one axis-aligned square of each radius in $\{1,2,\ldots,g\}$. The maximum number of radius-$g$ squares that can be covered using these squares, with each available square used at most once, is exactly $g/4$.
\end{lemma}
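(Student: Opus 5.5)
The plan has two halves: a construction showing that $g/4$ radius-$g$ squares can be covered, and a counting argument showing that no more are possible.

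\textbf{Lower bound (construction).} Write $g=4t$. Each target square of radius $g$ (side $2g$) is split into four quadrant squares of radius $g/2$, and each quadrant is covered by a single available square of radius at least $g/2$. Alone this covers only $g/8$ targets, since there are just $g/2$ radii in $[g/2,g]$. To double the count, I will use \emph{two} squares per quadrant-sized region. Take two squares of radii $a$ and $b$ with $a+b\ge g$. Placed side by side, they cover a rectangle of width $2a+2b\ge 2g$ and height $2\min(a,b)$. So a horizontal strip of height $2b$ across the target is covered by a pair with $a\ge b$ and $a+b\ge g$.

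I therefore pair radius $j$ with radius $g-j$, or more generally arrange the squares greedily by decreasing radius. A target is covered by stacking horizontal strips whose heights $2\min$ sum to at least $2g$. Concretely, pair the radii as $(g-i, i)$ for $i=1,\dots,g/2$; each pair has sum exactly $g$ and covers a strip of width $2g$ and height $2i$. Now group the pairs so that the $i$-values in each group sum to at least $g$. The $i$-values $1,\dots,g/2$ sum to $\tfrac{g}{4}(g/2+1)$. Grouping $\{i, g/2+1-i\}$ gives sums of $g/2+1$, so two such groups are needed per target. That yields $\lfloor g/8\rfloor$ targets, which is not enough, so I will try stacking strips vertically differently, or cover a target by one large square plus an L-shaped remainder. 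The most likely working configuration is: for target $k$, use radius $g-k$, which leaves an L-shaped band of width $2k$. This band is covered by two smaller squares of radius about $g/2$ or less. I will tune the assignment so that the total used is all radii in $\{1,\dots,g\}$ over $g/4$ targets, checking with explicit index arithmetic using $4\mid g$.

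\textbf{Upper bound (area/counting).} Suppose $N$ targets are fully covered. First, each target needs covering squares whose sides sum to at least its side length in both directions. More sharply, the boundary of a radius-$g$ square has length $8g$, and a square of radius $r$ covers at most $\min(8r,\,4g+\dots)$ of it. Intersected with a target, a radius-$r$ square covers at most $2r$ of each of at most two adjacent sides, or more precisely at most $4r$ of the boundary when $r<g$. So each target requires $\sum r \ge 2g$ over the squares assigned to it, with squares split among targets. This gives $N\cdot 2g\le \sum_{r=1}^{g} r = g(g+1)/2$, that is, $N\le (g+1)/4$, hence $N\le g/4$ since $N$ is an integer and $4\mid g$. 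The one gap to close is that a single square might cover parts of two targets; I handle this by attributing perimeter per target, since the per-target boundary bound only uses the intersection with that target. The bound is $4r$ for $r<g$, and $\le 8g$ for $r=g$, which still satisfies $\le 4r$ relative to that target's need only if we charge carefully. The radius-$g$ square must then be treated specially: if it coincides with a target, it contributes $8g=4r\cdot 2$. I will use the cleaner charge of "corners": each target has four corners, and a square of radius $r<g$ covering two corners of a target needs $r\ge g$ along a side. Combining corner charging with boundary length is the step I expect to be the main obstacle, along with getting the construction to meet exactly $g/4$.
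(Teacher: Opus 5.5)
\textbf{The lower bound has a genuine gap.} Your upper bound is essentially sound. The lower bound, which is half of the ``exactly $g/4$'' claim, is never established, and the one concrete configuration you settle on cannot work.

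You propose covering a target with one square of radius $g-k$ plus two smaller squares for the leftover band. But a square of radius $<g$ cannot contain two corners of a radius-$g$ square, since they are at $L_\infty$-distance $2g$. So wherever the radius-$(g-k)$ square is placed, the uncovered part still contains at least three corners, each needing its own square.

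Hence every target not covered by the radius-$g$ square needs at least four squares. Since only $g$ squares exist, a construction reaching $g/4$ can leave at most three radii unused, so the smallest radii must do real work. This is why your quadrant and strip attempts stall near $g/8$.

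The missing idea is a four-corner ``pinwheel''. For $j=1,3,\ldots,g/2-1$, use the group $\{j,j+1,g-j,g-j+1\}$; these $g/4$ groups partition $\{1,\ldots,g\}$. On the target $[0,2g]^2$, put the square of radius $g-j+1$ at the lower-left corner and the square of radius $g-j$ at the upper-right corner. Since $(g-j+1)+(g-j)\ge g$, they overlap. What remains is two rectangles with side lengths $2j$ and $2j-2$ at the other two corners, which the squares of radii $j$ and $j+1$ cover.

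\textbf{Fixing the upper bound.} Your inequality $N\cdot 2g\le g(g+1)/2$ is not justified as written, because the radius-$g$ square can cover the whole boundary of a target ($8g$, not $4g$). The boundary count does work if you let that square account for at most one target outright. Every other target needs radii $<g$ with $\sum 4r\ge 8g$, so $(N-1)\,2g\le g(g-1)/2$. Integrality of $N$ together with $4\mid g$ then gives $N\le g/4$. The corner count $1+4(N-1)\le g$ that you mention at the end is the paper's argument and is shorter.

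Finally, there is nothing to ``handle'' about a square serving two targets. The lemma assigns each available square to one target. Without that restriction the statement is false: for example, coincident targets are all covered by the single radius-$g$ square.
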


\begin{proof} \ 
%We first prove the upper bound. 
\textbf{Upper bound:} A square of radius strictly smaller than $g$ cannot cover two distinct corners of a radius-$g$ square. %, since the $L_\infty$-distance between any two distinct corners of such a square is $2g$. 
Therefore, any radius-$g$ square %that is not covered by the single available square of radius $g$ 
requires at least four smaller squares. %, one for each corner. 
So, if $t$ radius-$g$ squares are covered, at most one can be covered by the single square of radius $g$, and the rest each require at least four squares. 
%. If the square of radius $g$ is not used on any copy, then at least $4t$ squares are needed, so $t\leq g/4$. If the square of radius $g$ is used to cover one copy, then the remaining $t-1$ copies still require at least four squares each. 
Consequently, at least $1+4(t-1)=4t-3$ squares are needed. Since only $g$ squares are available and $g$ is divisible by $4$, we have $t\leq g/4$.\vspace{0.5mm}

%We now prove the lower bound. 
\noindent\textbf{Lower bound:}
For each $i\in \{0,1,\ldots,g/4-1\}$, let $j=2i+1$ and form the group $G_i=\{j,j+1,g-j,g-j+1\}$. These $g/4$ groups are pairwise disjoint and together use all radii in $\{1,2,\ldots,g\}$. %Fix one group $G_i$ and write $j=2i+1$. 
%Consider a target square of radius $g$. 
We cover a target square of radius $g$ with any group $G_i$. For that, we place the squares of radii $g-j+1$ and $g-j$ at two opposite corners of the target square, and place the squares of radii $j$ and $j+1$ at the two remaining corners (note that $j=2i+1$). The two larger squares cover the target square except for two corner regions, which are covered by the two smaller squares. See Figure~\ref{fig:four-cover} for an illustration. Hence, each group covers one radius-$g$ square, giving $g/4$ covered squares in total.
\qed
\end{proof}

\begin{figure}[!t]
\centering
\scalebox{1.05}{\swastikOne}\ \hspace*{14mm}
\scalebox{1.05}{\swastikTwo}
\caption{An illustration of the covering in Lemma~\ref{lem:quarter-cover} for two different groups. }\label{fig:four-cover}
\end{figure}

We now describe the improved algorithm. Let $\varepsilon>0$ be fixed, and let $\varepsilon'>0$ be chosen sufficiently small. We run the PTAS of Hochbaum--Maass (Theorem~\ref{thm:hochbaum-maass}) for the anywhere version. Let $h$ be the first guess for which the PTAS returns a cover $U$ of $\mathcal{P}$ by radius-$h$ squares satisfying $|U|/(1+\varepsilon')\leq h$. Let $m=|U|$. Then $m\leq (1+\varepsilon')h$, and by the same argument as in Section~\ref{sec:baseline}, the optimum is at least $h$.
Let $g$ be the smallest integer divisible by $4$ such that $g\geq h$. Then $g\leq h+3$. By increasing the radius of every square in $U$ from $h$ to $g$, we obtain a cover of $\mathcal{P}$ by $m$ radius-$g$ squares.

By Lemma~\ref{lem:quarter-cover}, the radii set $\{1,2,\ldots,g\}$ can be partitioned into $g/4$ groups, each of which covers one radius-$g$ square. If $m<g/4$, we use $m$ of these groups to cover all squares in $U$, resulting a cover of length at most $g+1$, which is no worse than the bound below. Otherwise, we use these $g/4$ groups to cover $g/4$ squares of $U$. The remaining $m-g/4$ squares of $U$ are covered one by one using the radii $g+1,g+2,\ldots,g+(m-g/4)$.
In the latter case, 
the largest radius used is $g+(m-g/4)$, and the resulting cover has length at most $L=g+(m-g/4)+1=m+3g/4+1$. Using $m\leq (1+\varepsilon')h$ and $g\leq h+3$, we get $L\leq (1+\varepsilon')h+3(h+3)/4+1=(7/4+\varepsilon')h+13/4$. In the former case, $L\leq g+1\leq h+4$, which also satisfies
$L\leq(7/4+\varepsilon')h+O(1)$.

\begin{theorem}\label{thm:seven-fourths}
For every fixed $\varepsilon>0$, there is a polynomial-time $(7/4+\varepsilon)$-approximation algorithm for the anywhere version of the distinct-radii square-cover problem. Equivalently, there is a polynomial-time $(7/4+\varepsilon)$-approximation algorithm for anywhere burning under the $L_\infty$ and $L_1$ metrics.
\end{theorem}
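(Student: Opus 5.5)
The proof mostly assembles the algorithm described just before the statement, so the plan is to check its correctness, its approximation ratio, and its running time in turn.

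\emph{Correctness.} Run the Hochbaum--Maass PTAS (Theorem~\ref{thm:hochbaum-maass}) with parameter $\varepsilon'$ for each guess $h\in\{1,\ldots,|\mathcal{P}|\}$. Take the first $h$ whose cover $U$ satisfies $|U|\le(1+\varepsilon')h$. The lower-bound argument of Section~\ref{sec:baseline} then shows that the optimum $\mathrm{OPT}$ is at least $h$. Rounding up to $g\le h+3$ with $4\mid g$ and enlarging the squares keeps $U$ a cover of $\mathcal{P}$, now by $m$ radius-$g$ squares. By Lemma~\ref{lem:quarter-cover}, the radii $\{1,\ldots,g\}$ split into $g/4$ disjoint groups, each of which covers one radius-$g$ square with an explicit placement. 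Assign $\min(m,g/4)$ groups to distinct squares of $U$, and give each remaining square its own radius from $g+1,\ldots,g+m-g/4$; each such radius exceeds $g$, so the new square contains the old one. All radii used are distinct, which gives a valid distinct-radii square cover of length $L\le\max\{g+1,\;m+3g/4+1\}$. Enlarging squares preserves containment, so coverage is immediate.

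\emph{Ratio.} From $m\le(1+\varepsilon')h$ and $g\le h+3$ we get $L\le(7/4+\varepsilon')h+13/4\le(7/4+\varepsilon')\mathrm{OPT}+13/4$. The main obstacle is absorbing the additive constant into a multiplicative $\varepsilon$. I would split on the size of $\mathrm{OPT}$, using $h\le\mathrm{OPT}$. If $h\ge h_0:=\lceil 13/(4(\varepsilon-\varepsilon'))\rceil$, then $13/4\le(\varepsilon-\varepsilon')h$, so with $\varepsilon'=\varepsilon/2$ we get $L\le(7/4+\varepsilon)\mathrm{OPT}$. If $h<h_0$, note that $U$ has at most $(1+\varepsilon')h=O(1)$ squares of radius $h$ covering $\mathcal{P}$. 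I then expect $\mathrm{OPT}$ to be bounded by a constant $k_0$ depending only on $\varepsilon$: using the radii $h,\ldots,h+m-1$ on the squares of $U$ gives length at most $h+m\le(2+\varepsilon')h_0$. For each $k\le k_0$, decide exactly whether a cover of length $k$ exists and output the smallest feasible $k$, which is optimal.

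For the exact decision I would use the standard reduction to canonical placements. Each square of a fixed radius can be translated until its left side and its bottom side each touch an input point. So there are $O(|\mathcal{P}|^2)$ candidate positions per radius, and brute-forcing over $k$ radii costs $|\mathcal{P}|^{O(k_0)}$, which is polynomial for fixed $\varepsilon$. If this brute force seems too heavy, an alternative is to note that the paper's intended bound is ``$(7/4+\varepsilon')h+O(1)$'' and simply state that small optima are solved exhaustively in this way.

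\emph{Running time and transfer.} The algorithm makes $|\mathcal{P}|$ PTAS calls plus linear-time bookkeeping, together with the constant-size exhaustive search, so it runs in polynomial time for fixed $\varepsilon$. The $L_1$ result follows by applying $\phi$, solving under $L_\infty$, and mapping the centers back by $\phi^{-1}$, exactly as in Theorem~\ref{thm:two-approx}.
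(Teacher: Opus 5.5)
Your proposal is correct and follows essentially the same route as the paper. Both use the same PTAS guess with rounding of $h$ up to a multiple of $4$ and the Lemma~\ref{lem:quarter-cover} replacement, giving length $(7/4+\varepsilon')h+O(1)$; large $h$ absorbs the additive constant, and small $h$ forces a constant optimum that is found by exhaustive search over a polynomial-size set of centers. Your concrete choices ($\varepsilon'=\varepsilon/2$, $h_0=\lceil 13/(4(\varepsilon-\varepsilon'))\rceil$, $\mathrm{OPT}\le(2+\varepsilon')h_0$) and your left/bottom-side normalization of centers, which is a subset of the paper's $\{p_x\pm r\}\times\{p_y\pm r\}$ discretization, simply make the paper's sketch explicit.
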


\begin{proof}
The construction above gives a cover of length at most $(7/4+\varepsilon')h+O(1)$, where $h$ is a lower bound on the optimum. If $h$ is larger than a constant depending only on $\varepsilon$, the additive term is absorbed into the $\varepsilon h$ term by choosing $\varepsilon'$ sufficiently small. %If $h$ is bounded by that constant, we compute an optimal solution by exhaustive search. The result for $L_1$ follows from the transformation $\phi(x,y)=(x+y,x-y)$.
If $h$ is bounded by that constant, then the algorithm gives a solution of constant length $L$, and hence $\mathrm{OPT}\leq L$. We can then find an optimum by exhaustive search: for each $k\in \{1,2,\ldots,L\}$, and for each radius $r\in\{0,1,\ldots,k-1\}$, the possible centers can be discretized to the polynomial-size set
$\{p_x-r,p_x+r:p\in\mathcal{P}\}\times\{p_y-r,p_y+r:p\in\mathcal{P}\}$.
Since $L$ is constant, enumerating all choices and checking coverage takes polynomial time.
\qed\end{proof}

\begin{remark}
\emph{The replacement step in Theorem~\ref{thm:seven-fourths} is best possible in the following sense. After the PTAS returns $g$ congruent radius-$g$ squares, Lemma~\ref{lem:quarter-cover} shows that the smaller radii in $\{1,2,\ldots,g\}$ can cover at most $g/4$ of them. Therefore, any further improvement in the approximation ratio requires using more information about the input point set, or an approach that does not rely on first reducing the instance to a congruent-square cover and then covering those squares independently.} 
\end{remark}

%%%%%%%%%%%%%%%%%%%%%%%%%
\subsection{Extension to Higher Dimensions}\label{subsec:higher-dim}

%The planar case was handled more sharply in Theorem~\ref{thm:seven-fourths}.
Here we give a simple extension of the anywhere-burning algorithm to every fixed dimension \(d\geq 3\) under the \(L_\infty\) metric.
An axis-aligned cube in \(\mathbb{R}^d\) with side length \(2r\) is called a
\(d\)-cube of radius \(r\). We use the Hochbaum--Maass PTAS
(Theorem~\ref{thm:hochbaum-maass}) for covering points in \(\mathbb{R}^d\) by
congruent axis-aligned \(d\)-cubes. %The only additional ingredient is the following replacement lemma.
The only additional ingredient is the following replacement lemma, whose proof is deferred to the appendix. The idea is as follows: partition each radius-\(g\) cube into its \(2^d\) corner subcubes, each of side length \(g\). A group of \(2^d\) available cubes, each of radius at least \(g/2\), can then be placed at the corners to cover one radius-\(g\) cube.

\newcommand{\STATEMENTlemdcubecover}{Let \(d\geq 2\) be fixed, and let \(g\) be a positive integer divisible by \(2^{d+1}\). Suppose we have one axis-aligned \(d\)-cube of each radius in \(\{1,2,\ldots,g-1\}\). Then these cubes can be used to cover
\(g/2^{d+1}\) radius-\(g\) \(d\)-cubes, with each available cube used at most once.}

\begin{lemma}\label{lem:d-cube-cover}%\emph{[Appendix]}
\STATEMENTlemdcubecover
\end{lemma}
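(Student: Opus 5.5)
The plan is the corner-subcube construction sketched just before the statement. Partition a radius-$g$ $d$-cube $Q=B_\infty(c,g)$ into its $2^d$ corner subcubes. For a sign vector $\sigma\in\{-1,+1\}^d$, the corner subcube is $Q_\sigma=\{x: 0\le \sigma_k(x_k-c_k)\le g \text{ for all } k\}$. Each $Q_\sigma$ is an axis-aligned cube of side length $g$, i.e., a cube of radius $g/2$ centered at $c+\sigma g/2$. Since $\bigcup_\sigma Q_\sigma=Q$, it suffices to cover each $Q_\sigma$ by a single available cube. Any cube $B_\infty(c',r)$ with $r\ge g/2$ contains $Q_\sigma$ when we take $c'=c+\sigma g/2$, because $Q_\sigma=B_\infty(c+\sigma g/2,g/2)\subseteq B_\infty(c+\sigma g/2,r)$. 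Hence any $2^d$ distinct available radii, each at least $g/2$, cover one radius-$g$ cube.

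It remains to count the available radii that are at least $g/2$. Among $\{1,\ldots,g-1\}$ these are $g/2,g/2+1,\ldots,g-1$, which gives exactly $g/2$ radii. Since $g$ is divisible by $2^{d+1}$, the number $g/2$ is divisible by $2^d$. So these radii can be partitioned into $(g/2)/2^d=g/2^{d+1}$ pairwise disjoint groups of size $2^d$. Assign each group to a distinct target radius-$g$ cube and place its members at the $2^d$ corner centers as above. Each available cube is used at most once, and we obtain $g/2^{d+1}$ covered radius-$g$ cubes, as claimed. The radii below $g/2$ are simply unused, which is consistent with the statement, since it only asks for at least this many covered cubes.

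There is no real obstacle here. The only points needing care are verifying $\bigcup_\sigma Q_\sigma=Q$ and the inclusion of the corner subcube in the enlarged cube. Both reduce to coordinatewise inequalities: for $x\in Q$, choose $\sigma_k=\mathrm{sign}(x_k-c_k)$, with either sign allowed when $x_k=c_k$. We also need the integrality of $g/2$, which the divisibility hypothesis guarantees. Unlike Lemma~\ref{lem:quarter-cover}, no optimality (upper bound) claim is made, so no lower-bound argument is needed.
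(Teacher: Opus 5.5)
Your proposal is correct and follows the paper's proof: both use only the $g/2$ radii $g/2,\ldots,g-1$, split them into $g/2^{d+1}$ groups of size $2^d$, and place one cube of each group at each of the $2^d$ corner subcubes (side length $g$) of a target radius-$g$ cube. Your description of the corner subcubes as $B_\infty(c+\sigma g/2,\,g/2)$ simply spells out the containment step that the paper states more briefly.
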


\newcommand{\Prooflemdcubecover}{
\begin{proof}
Let \(Q\) be a radius-\(g\) \(d\)-cube. Partition \(Q\) by its coordinate
midplanes into \(2^d\) congruent subcubes, each of side length \(g\). Each subcube is incident to one corner of \(Q\).
We use only the radii $g/2,g/2+1,\ldots,g-1$; there are \(g/2\) such radii. Since \(g\) is divisible by \(2^{d+1}\), these
radii can be partitioned into \(g/2^{d+1}\) groups, each of size \(2^d\).
Consider one such group. Assign its \(2^d\) radii arbitrarily to the \(2^d\) corners of \(Q\), and place each cube at the corresponding corner. Since every
assigned radius is at least \(g/2\), each such cube has side length at least
\(g\), and therefore covers the corresponding subcube in the midpoint
partition of \(Q\). Hence, the \(2^d\) cubes in the group cover all of \(Q\).
Applying this construction independently to each group gives covers for
\(g/2^{d+1}\) radius-\(g\) \(d\)-cubes, using each available radius at most
once.
\qed
\end{proof}}
%\Prooflemdcubecover

We now describe the algorithm. Fix \(\varepsilon>0\), and let
\(\varepsilon'>0\) be chosen sufficiently small. For each integer guess \(h\),
we apply the Hochbaum--Maass PTAS of Theorem~\ref{thm:hochbaum-maass} to cover
\(\mathcal{P}\subset\mathbb{R}^d\) by congruent axis-aligned \(d\)-cubes of
radius \(h\). Let \(h^*\) be the first guess for which the returned cover \(U\)
satisfies $|U|/(1+\varepsilon')\leq h^*.$
As in Section~\ref{sec:baseline}, the optimum anywhere-burning value is at
least \(h^*\).

Let \(g\) be the smallest integer divisible by \(2^{d+1}\) such that
\(g\geq h^*\). Then \(g\leq h^*+2^{d+1}-1\). By increasing the radius of every
cube in \(U\) from \(h^*\) to \(g\), we obtain a cover of \(\mathcal{P}\) by
\(|U|\) radius-\(g\) \(d\)-cubes.

Let $t={g}/{2^{d+1}}.$
Now, if \(|U|<t\), then Lemma~\ref{lem:d-cube-cover} can be used to cover all cubes
in \(U\) using only radii smaller than \(g\), giving a cover of length at most
\(g\). %This is already within the claimed bound, up to the additive constant caused by rounding \(g\).
Otherwise, assume \(|U|\geq t\). By Lemma~\ref{lem:d-cube-cover}, the radii
smaller than \(g\) can be used to cover \(t\) cubes of \(U\). The remaining
\(|U|-t\) cubes are covered one by one using radii at least \(g\). Hence the
resulting burning sequence has length at most $g+(|U|-t)
=
|U|+\left(1-{1}/{2^{d+1}}\right)g.$
Since \(|U|\leq (1+\varepsilon')h^*\) and
\(g\leq h^*+2^{d+1}-1\), this length is at most $\left(2-{1}/{2^{d+1}}+\varepsilon'\right)h^*$ plus an additive constant depending only on \(d\). 
The proof of the approximation guarantee stated in the following theorem is
deferred to the appendix; it handles the additive constant and bounded values
of \(h^*\).
%The proof of the resulting approximation guarantee, stated in the following theorem, including the handling of the additive constant and bounded values of \(h^*\), is deferred to the appendix. 

\newcommand{\STATEMENTthmhigherdanywhere}{For every fixed dimension \(d\geq 3\) and every fixed \(\varepsilon>0\), there is a polynomial-time algorithm with an approximation factor of $\left(2-{1}/{2^{d+1}}+\varepsilon\right) $ for anywhere burning under the \(L_\infty\) metric in
\(\mathbb{R}^d\).}

\begin{theorem}\label{thm:higher-d-anywhere}%\emph{[Appendix]}
\STATEMENTthmhigherdanywhere
\end{theorem}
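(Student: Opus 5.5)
The plan is to follow the structure of the proof of Theorem~\ref{thm:seven-fourths}. We use the algorithm described just before the statement and split into two regimes according to whether \(h^*\) exceeds a threshold \(H_0\) that depends only on \(d\) and \(\varepsilon\).

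\textbf{Correctness and the lower bound.} First I will check that the constructed family of cubes really covers \(\mathcal{P}\) with distinct radii.
\begin{itemize}
\item Lemma~\ref{lem:d-cube-cover} covers \(t=g/2^{d+1}\) of the enlarged radius-\(g\) cubes using radii in \(\{g/2,\ldots,g-1\}\).
\item Each remaining cube of \(U\) receives its own radius from \(g,g+1,\ldots\). Since that radius is at least \(g\), the new cube contains the radius-\(g\) cube, and hence the original radius-\(h^*\) cube.
\item In the case \(|U|<t\), only \(|U|\) of the groups are used, so the length is at most \(g\).
\end{itemize}
The lower bound \(\mathrm{OPT}\ge h^*\) is proved exactly as in Section~\ref{sec:baseline}, now with \(d\)-cubes. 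By minimality of \(h^*\), we have \(\tau(h^*-1)>h^*-1\). A cover of length at most \(h^*-1\), with every cube enlarged to radius \(h^*-1\), would give at most \(h^*-1\) congruent cubes, which is a contradiction. This relies on Remark~\ref{rem:basic-higher-d}.

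\textbf{Bounding the length.} In both cases the length is
\[
L\le |U|+\bigl(1-2^{-(d+1)}\bigr)g+1\le \bigl(2-2^{-(d+1)}+\varepsilon'\bigr)h^*+c_d,
\]
where \(c_d\le 2^{d+1}\). Choose \(\varepsilon'=\varepsilon/2\) and \(H_0=\lceil 2c_d/\varepsilon\rceil\). If \(h^*\ge H_0\), then \(c_d\le(\varepsilon/2)h^*\), so \(L\le(2-2^{-(d+1)}+\varepsilon)h^*\le(2-2^{-(d+1)}+\varepsilon)\,\mathrm{OPT}\).

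\textbf{Small \(h^*\): exhaustive search.} The main obstacle is the case \(h^*<H_0\), where the additive constant cannot be absorbed. Here the algorithm's output has length \(L\) bounded by a constant depending only on \(d\) and \(\varepsilon\), so \(\mathrm{OPT}\le L\). For each \(k\le L\) we enumerate all assignments of centers to the radii \(0,\ldots,k-1\). The key step is the discretization of centers, which I will argue as follows.
\begin{itemize}
\item For a radius \(r\), the set of points a cube covers depends only on which input points lie within \(L_\infty\)-distance \(r\) of the center, coordinate by coordinate.
\item Any feasible cube can be translated, one coordinate at a time, until in each coordinate \(i\) a side touches some input point's coordinate \(p_i\), without losing any covered point. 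If the cube covers no points, it can be dropped.
\item Hence it suffices to take centers in the product set \(\prod_{i=1}^d\{p_i-r,\,p_i+r : p\in\mathcal{P}\}\), which has size \((2n)^d\).
\end{itemize}
With \(k\le L\) cubes, the enumeration has size \(((2n)^d)^{L}\). Since \(d\) and \(L\) are constants, this is polynomial, and coverage can be checked in polynomial time.

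\textbf{Running time.} The algorithm makes at most \(n\) guesses of \(h\) and invokes the Hochbaum--Maass PTAS (Theorem~\ref{thm:hochbaum-maass}) once per guess. Together with the bounded search above, the total running time is polynomial for fixed \(d\) and \(\varepsilon\).
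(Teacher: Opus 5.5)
Your proposal is correct and follows the paper's proof essentially step for step. It uses the same lower bound $h^*\le\mathrm{OPT}$, the same length bound $|U|+(1-2^{-(d+1)})g$ with the additive constant absorbed by a small $\varepsilon'$, and the same exhaustive search over the product grid $\prod_i\{p_i-r,\,p_i+r : p\in\mathcal{P}\}$. You go slightly further by justifying that grid with the translation argument and by fixing $\varepsilon'=\varepsilon/2$ and $H_0=\lceil 2c_d/\varepsilon\rceil$ explicitly.

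One small slip: your intermediate inequality $L\le |U|+(1-2^{-(d+1)})g+1$ is not justified when $|U|<t-1$. In that case, however, $L\le g\le h^*+2^{d+1}-1$ already gives the final bound $(2-2^{-(d+1)}+\varepsilon')h^*+c_d$.
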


\newcommand{\Proofthmhigherdanywhere}{
\begin{proof}
% The construction above gives a burning sequence of length at most
% \[
% \left(2-\frac{1}{2^{d+1}}+\varepsilon'\right)h^*+O_d(1),
% \]
% where \(h^*\leq \mathrm{OPT}\). Choose \(\varepsilon'>0\) sufficiently small.
% If \(h^*\) is larger than a constant depending only on \(d\) and
% \(\varepsilon\), the additive \(O_d(1)\) term is absorbed into the
% \(\varepsilon h^*\) term, and the returned sequence has length at most
% \[
% \left(2-\frac{1}{2^{d+1}}+\varepsilon\right)\mathrm{OPT}.
% \]

Run the algorithm described in Subsection~\ref{subsec:higher-dim}. Let \(h^*\)
be the first successful guess, let \(U\) be the corresponding cover returned by
the Hochbaum--Maass PTAS, and let \(g\) be the smallest multiple of
\(2^{d+1}\) such that \(g\ge h^*\). As in Section~\ref{sec:baseline}, we have
\(h^*\le \mathrm{OPT}\).

By Lemma~\ref{lem:d-cube-cover}, the radii smaller than \(g\) can cover
\(g/2^{d+1}\) of the radius-\(g\) cubes obtained from \(U\). The remaining
cubes are covered individually using radii of at least \(g\). Therefore, the returned
burning sequence has length at most
$|U|+\left(1-\frac{1}{2^{d+1}}\right)g.$
Since \(|U|\le (1+\varepsilon')h^*\) and
\(g\le h^*+2^{d+1}-1\), this is at most
$\left(2-\frac{1}{2^{d+1}}+\varepsilon'\right)h^*$ plus an additive constant depending only on \(d\).

It remains to handle the case where \(h^*\) is bounded by a constant depending only on \(d\) and \(\varepsilon\). Then the algorithm already gives a solution
of constant length, and hence the optimum is also bounded by a constant. We can
therefore find an optimal burning sequence by exhaustive search. For each fixed
radius \(r\), the possible centers can be discretized to
$C_r=X_1^r\times\cdots\times X_d^r $ and $
X_j^r=\{p_j-r,p_j+r:p\in\mathcal{P}\}.
$ For fixed \(d\), this set is polynomial in \(|\mathcal{P}|\). Since only a constant number of radii are involved, enumerating all center choices and checking
coverage takes polynomial time.

Therefore, the algorithm is polynomial for every fixed \(d\) and fixed
\(\varepsilon\), and achieves the claimed approximation factor.
\qed
\end{proof}
}

\section{A Point Burning Algorithm}\label{sec:point}
In this section, we consider the point version of the distinct-radii square-cover problem. Recall that, in this variant, every square must be centered at an input point of $\mathcal{P}$. We present an algorithm with approximation factor $3151/1620+\varepsilon<1.9451+\varepsilon$.

\paragraph{Overview.}
As in the basic algorithm, we first use the PTAS for Discrete Unit Square Cover (Theorem~\ref{thm:mustafa-ray}) with candidate centers restricted to $\mathcal{P}$. This gives a cover of $\mathcal{P}$ by congruent squares of radius $g$. The basic $2$-approximation would then cover each of these squares separately using radii at least $g$, leaving all radii smaller than $g$ unused.
For point burning, the smaller radii are harder to use than in the anywhere setting, since their centers must be input points. 

We use the following idea. 
Let $S$ be one of the radius-$g$ squares returned by the PTAS, and let its center be $c\in\mathcal{P}$. We place one smaller square centered at $c$ that covers the middle of $S$. The remaining boundary region is partitioned into small square cells. If a cell contains no input point, it can be ignored. If it contains input points, we center a sufficiently large square at one of these points; this square covers the entire cell, and hence all input points in it. 
With carefully selected replacement patterns, we can 
%
%We use three replacement patterns, based on $5\times 5$, $4\times 4$, and $3\times 3$ grids. Together, these patterns allow us to replace 
cover $89g/1620$ of the radius-$g$ squares using only radii smaller than $g$. The remaining squares are covered using radii at least $g$.

\subsection{Point-Centered Replacement Lemma}\label{subsec:point-replacement}

We first show that a fixed fraction of the radius-$g$ squares returned by the discrete covering step can be replaced by smaller point-centered squares.

\begin{lemma}\label{lem:point-replacement-scheme}
Let $g$ be a positive integer divisible by $12960$ and $\mathcal{U}$ be a set of radius-$g$ axis-aligned squares, each centered at a point of $\mathcal{P}$. If $|\mathcal{U}|\geq \frac{89}{1620}g$, then any subset of $\frac{89}{1620}g$ squares from $\mathcal{U}$ can be covered by squares of distinct radii smaller than $g$, all centered at points of $\mathcal{P}$. More precisely, there is a point-centered distinct-radii square cover, using only radii in $\{1,2,\ldots,g-1\}$, that covers all points of $\mathcal{P}$ contained in these $\frac{89}{1620}g$ %radius-$g$ 
squares.
\end{lemma}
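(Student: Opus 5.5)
The plan is to give a per-square replacement construction based on a $k\times k$ grid, and then to allocate the radii $\{1,\dots,g-1\}$ among three grid patterns ($k=3,4,5$) so that the total number of replaced squares is at least $\frac{89}{1620}g$.

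\textbf{Step 1 (single-square pattern).} Fix a radius-$g$ square $S=B_\infty(c,g)$ with $c\in\mathcal{P}$. Partition $S$ into a $k\times k$ grid of cells of side $2g/k$, each of which is an $L_\infty$-ball of radius $g/k$. If a cell contains no input point, we ignore it. Otherwise, choose any $p\in\mathcal{P}$ in the cell. Every point of the cell is within $L_\infty$-distance $2g/k$ of $p$, so $B_\infty(p,r)$ covers the cell for every radius $r\ge 2g/k$.
\begin{itemize}
\item For odd $k$, the central cell is itself $B_\infty(c,g/k)$. Since $c\in\mathcal{P}$, one square of radius $\ge g/k$ centered at $c$ covers it.
\item For $k=4$, the central $2\times 2$ block is $B_\infty(c,g/2)$. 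It is covered by one square of radius $\ge g/2$ centered at $c$.
\end{itemize}
This gives three per-square patterns, each listing the number of squares needed and the minimum radius each must have:
\begin{itemize}
\item $k=3$: $8$ squares of radius $\ge 2g/3$ and one of radius $\ge g/3$.
\item $k=4$: $13$ squares of radius $\ge g/2$.
\item $k=5$: $24$ squares of radius $\ge 2g/5$ and one of radius $\ge g/5$.
\end{itemize}
All centers are input points, and all points of $\mathcal{P}\cap S$ are covered. Using a radius larger than the required minimum is always allowed.

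\textbf{Step 2 (allocation of radii).} Split $\{1,\dots,g-1\}$ into the intervals
\[
[g/5,\,g/3),\quad [g/3,\,2g/5),\quad [2g/5,\,g/2),\quad [g/2,\,2g/3),\quad [2g/3,\,g).
\]
The divisibility of $g$ by $12960=2^5\cdot 3^4\cdot 5$ makes every interval endpoint, and every group count used below, an integer.
\begin{itemize}
\item Use $[2g/3,g)$ for the boundary cells of $3\times3$ patterns.
\item Use $[g/2,2g/3)$ for $4\times4$ patterns, including their centers.
\item Use $[2g/5,g/2)$ for the boundary cells of $5\times5$ patterns.
\item Use the small intervals $[g/5,2g/5)$ as centers. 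Radii in $[g/3,2g/5)$ go to $3\times3$ centers and radii in $[g/5,g/3)$ go to $5\times5$ centers.
\end{itemize}
A naive greedy count gives $g/24$ squares for $3\times3$, $g/78$ squares for $4\times4$, and $g/240$ squares for $5\times5$. Each pattern type has far more center radii available than it needs. The total already exceeds $\frac{89}{1620}g$. Because the subset chosen from $\mathcal{U}$ is arbitrary and the patterns for different squares use disjoint radii, the squares can be processed independently. Together these give the claim.

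\textbf{Main obstacle.} I expect the hard part to be getting the integrality exactly right. The counts $g/78$ and the like are not integers for all $g$ divisible by $12960$, so the allocation must be adjusted; this adjustment is presumably why the paper's constant is $89/1620$ rather than my greedy figure.

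I would handle this as follows:
\begin{itemize}
\item Set up a small linear program whose variables are the number $n_k$ of squares replaced by pattern $k$. Each interval boundary $\rho$ gives a capacity constraint: the number of squares requiring radius $\ge\rho$ is at most $g-\rho$.
\item Choose an integral feasible point, rounding the counts down to multiples compatible with $12960$. The target is $n_3=\frac{g}{27}$-type values summing to $\frac{89}{1620}g$.
\item Verify the nested capacity (Hall-type) conditions interval by interval.
\end{itemize}
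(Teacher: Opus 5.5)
Your argument is correct, but you allocate the radii differently from the paper.

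The paper never uses radii below $2g/5$. It works with $I_1=[2g/5,g/2)$, $I_2=[g/2,2g/3)$ and $I_3=[2g/3,g)$:
\begin{itemize}
\item its $5\times5$ pattern spends one $I_3$ radius at $c$ to cover the whole central $3\times3$ block, plus $16$ radii from $I_1$, giving $g/160$ squares;
\item its $4\times4$ pattern spends one $I_3$ radius at $c$ plus $12$ radii from $I_2$, giving $g/72$ squares;
\item the leftover $I_3$ radii are used nine at a time in $3\times3$ patterns.
\end{itemize}
This gives $g/160+g/72+451g/12960=\frac{89}{1620}g$ in total.

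You instead put the otherwise wasted radii in $[g/5,2g/5)$ on the central cells of the odd grids. That leaves all of $[2g/3,g)$ for the $3\times3$ boundary cells at eight per square. This more than makes up for your $4\times4$ and $5\times5$ patterns yielding fewer squares ($g/78$ and $g/240$), so you prove a strictly stronger statement.

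The integrality issue you single out as the main obstacle is not one, and no LP or Hall-type rounding is needed. With $g=12960t$ you get exactly $540t$ squares of type $3\times3$ and $54t$ of type $5\times5$. You also get at least $166t$ of type $4\times4$, since $13\cdot166t=2158t\le g/6=2160t$. The total is at least $760t\ge 712t=\frac{89}{1620}g$. Your sentence claiming every group count is an integer is false for $g/78$, but taking the floor costs nothing.

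Your guess about where $89/1620$ comes from is also off: it does not arise from integrality adjustments. It is simply what the paper's less economical allocation yields, and $12960$ is chosen to make $g/160$, $g/72$ and $451g/12960$ integral.

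As a by-product, your count of $\frac{19}{324}g$ replaced squares would improve the point-burning ratio from $3151/1620\approx1.9451$ to $629/324\approx1.9414$.
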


\begin{proof}
Consider the intervals $I_1=[2g/5,g/2)$, $I_2=[g/2,2g/3)$, and $I_3=[2g/3,g)$ of radii below $g$. %The interval 
Then $I_1$ contains $g/10$ radii, $I_2$ contains $g/6$ radii, and $I_3$ contains $g/3$ radii. Radii smaller than $2g/5$ are not used.

\begin{figure}[t!]
%\centering
\hspace{-.5cm}
\begin{subfigure}[b]{0.32\columnwidth}\centering\scalebox{.58}{\pointFivebyFive}\caption{The $5\times 5$ pattern.}\label{subfig:Five}\end{subfigure}
\begin{subfigure}[b]{0.32\columnwidth}\centering\scalebox{.58}{\pointFourbyFour}\caption{The $4\times 4$ pattern.}\label{subfig:Four}\end{subfigure}
\begin{subfigure}[b]{0.32\columnwidth}\centering\scalebox{.58}{\PointThreebyThreeNew}\caption{The $3\times 3$ pattern.}\label{subfig:Three}\end{subfigure}
\caption{An illustration of the covering in Lemma~\ref{lem:point-replacement-scheme}.}\label{fig:point-patterns}
\end{figure}
\vspace*{2mm} \noindent \textbf{The $5\times 5$ pattern. }
Take $g/160$ of the radius-$g$ squares. For each such square $S$, let $c\in\mathcal{P}$ be its center. Use one radius from $I_3$ to place a square centered at $c$ (See Figure~\ref{subfig:Five}). Since every radius in $I_3$ is at least $2g/3$, this square covers the central $3\times 3$ block of a $5\times 5$ subdivision of $S$ into cells of side length $2g/5$. There are $16$ remaining boundary cells (yellow cells). For each nonempty boundary cell, choose one input point in that cell and center a square of radius at least $2g/5$ at that point. Such a square covers the entire cell, since the $L_\infty$-diameter of the cell is $2g/5$. Therefore, the boundary cells can be covered using at most $16$ radii from $I_1$. Since the pattern is applied to $g/160$ squares, it uses at most $g/10$ radii from $I_1$, exactly the number available. 
\paragraph{The $4\times 4$ pattern.}
Next, take $g/72$ further radius-$g$ squares. For each such square $S$, again place one square centered at the center $c$ of $S$, using a radius from $I_3$ (See Figure~\ref{subfig:Four}). %This covers the middle part of $S$. 
Subdivide $S$ into a $4\times 4$ grid of cells, each of side length $g/2$. The middle $2\times 2$ block is covered by the central square, leaving $12$ boundary cells. For each nonempty cell, choose an input point in the cell and center a square of radius at least $g/2$ at that point. This covers the entire cell. Thus, each radius-$g$ square covered by this pattern uses one radius from $I_3$ and at most $12$ radii from $I_2$. Since the pattern is applied to $g/72$ squares, it uses at most $g/6$ radii from $I_2$, exactly the number available. 

\paragraph{The $3\times 3$ pattern.}
After the first two patterns, the number of unused radii in $I_3$ is $g/3-g/160-g/72$. We use these radii in groups of $9$. For each radius-$g$ square $S$, subdivide $S$ into a $3\times 3$ grid of cells, each of side length $2g/3$. For each nonempty cell, choose an input point in that cell and center a square (of radius at least $2g/3$) at that point to cover %. This covers 
the cell. Hence, each radius-$g$ square can be covered using at most $9$ radii from $I_3$. Hence, the remaining radii in $I_3$ cover $(g/3-g/160-g/72)/9=451g/12960$ radius-$g$ squares.

Combining the three patterns, the total number of radius-$g$ squares replaced using radii smaller than $g$ is $g/160+g/72+451g/12960=89g/1620$.
\qed\end{proof}

\subsection{Algorithm and Analysis}\label{subsec:point-analysis}
Fix $\varepsilon>0$, and let $\varepsilon'>0$ be chosen sufficiently small. We run the PTAS of %Mustafa and Ray for Discrete Unit Square Cover (
Theorem~\ref{thm:mustafa-ray} for Discrete Unit Square Cover with candidate centers %restricted to 
$\mathcal{P}$. For each integer guess $h$, we compute a $(1+\varepsilon')$-approximate cover of $\mathcal{P}$ by radius-$h$ squares centered at points of $\mathcal{P}$. Let $h^*$ be the smallest guess for which the PTAS returns a cover $U$ satisfying $|U|/(1+\varepsilon')\leq h^*$. As in Section~\ref{sec:baseline}, the optimum point burning value is at least $h^*$.

% Let $g$ be the smallest integer divisible by $12960$ such that $g\geq h^*$. Then $g\leq h^*+12959$. By increasing the radius of every square in $U$ from $h^*$ to $g$, we obtain a cover of $\mathcal{P}$ by $m=|U|$ radius-$g$ squares, all centered at points of $\mathcal{P}$.
% %If $m<89g/1620$, then covering all squares in $U$ individually using radii at least $g$ gives a cover of length at most $g+m$, which is already better than the bound below. Otherwise, b
% Applying Lemma~\ref{lem:point-replacement-scheme}, we cover $89g/1620$ of these radius-$g$ squares using only radii smaller than $g$. The remaining $m-89g/1620$ squares are covered one by one using radii at least $g$. Thus, the resulting cover has length at most $g+(m-89g/1620)=m+1531g/1620$. 
% %
% Since $m\leq (1+\varepsilon')h^*$ and $g\leq h^*+12959$, this is at most $(1+\varepsilon')h^*+(1531/1620)(h^*+12959)=(3151/1620+\varepsilon')h^*+O(1)$. For sufficiently large $h^*$, the additive constant is absorbed into the $\varepsilon h^*$ term. If $h^*$ is bounded by a constant depending only on $\varepsilon$, we find an optimal solution by exhaustive search. 

% The proof of the resulting approximation guarantee is deferred to the appendix. The construction above gives a cover of length \((3151/1620+\varepsilon')h^*+O(1)\), where \(h^*\) is a lower bound on the optimum; the remaining details handle the additive constant and bounded values of \(h^*\).

Let \(g\) be the smallest integer divisible by \(12960\) such that \(g\geq h^*\). Then \(g\leq h^*+12959\). By increasing the radius of every square in \(U\) from \(h^*\) to \(g\), we get a cover of \(\mathcal{P}\) by \(m=|U|\) squares of radius \(g\), all centered at points of \(\mathcal{P}\). If \(m<89g/1620\), we cover all squares in \(U\) individually using squares of distinct radii at least \(g\), obtaining a cover of length at most \(g+m< (1709/1620)g\), which is already
within the desired approximation bound up to the same additive constant. Otherwise, by Lemma~\ref{lem:point-replacement-scheme}, we cover \(89g/1620\) of these radius-\(g\) squares using only radii smaller than \(g\). The remaining \(m-89g/1620\) squares are covered one by one using radii at least \(g\). Therefore, the resulting cover has length at most $g+\left(m-{89g}/{1620}\right) = m+{1531g}/{1620}.$ Since \(m\leq (1+\varepsilon')h^*\) and \(g\leq h^*+12959\), this is at most $\left({3151}/{1620}+\varepsilon'\right)h^*+O(1).$ The proof of the resulting approximation guarantee, stated in the following theorem, including the handling of the additive constant and bounded values of \(h^*\), is deferred to the appendix.

\newcommand{\STATEMENTthmpointburning}{For every fixed $\varepsilon>0$, there is a polynomial-time $(3151/1620+\varepsilon)$-approximation algorithm for the point version of the distinct-radii square-cover problem. Equivalently, there is a polynomial-time $(1.9451+\varepsilon)$-approximation algorithm for point burning under both the $L_\infty$ and $L_1$ metrics.}

\begin{theorem}\label{thm:point-burning}%\emph{[Appendix]}
\STATEMENTthmpointburning
\end{theorem}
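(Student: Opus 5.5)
The proof follows the pattern used for Theorem~\ref{thm:seven-fourths} and Theorem~\ref{thm:higher-d-anywhere}. I will run the algorithm of Subsection~\ref{subsec:point-analysis} with a parameter $\varepsilon'>0$ to be fixed later in terms of $\varepsilon$. The steps are as follows.

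\textbf{Step 1 (lower bound).} Let $h^*$ be the first guess accepted by the Mustafa--Ray PTAS (Theorem~\ref{thm:mustafa-ray}) with candidate centers $C=\mathcal{P}$. The argument of Section~\ref{sec:baseline} gives $h^*\le \mathrm{OPT}$. It applies verbatim to the point version: enlarging all squares of an optimal point-centered cover of length at most $h^*-1$ to radius $h^*-1$ yields at most $h^*-1$ point-centered congruent squares. This contradicts $\tau(h^*-1)>h^*-1$.

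\textbf{Step 2 (upper bound on the constructed cover).} Round $h^*$ up to the smallest multiple $g$ of $12960$, and enlarge the squares of $U$ to radius $g$. Their centers stay in $\mathcal{P}$.
\begin{itemize}
\item If $m<89g/1620$, cover the squares individually with radii $g,\dots,g+m-1$. The length is at most $g+m\le 1709g/1620$.
\item Otherwise, apply Lemma~\ref{lem:point-replacement-scheme} to $89g/1620$ of the squares, using radii in $\{1,\dots,g-1\}$ only, and cover the rest with radii $g,\dots,g+m-89g/1620-1$. These radii are distinct from those used by the lemma, and all centers lie in $\mathcal{P}$. The length is at most $m+1531g/1620$.
\end{itemize}
Using $m\le(1+\varepsilon')h^*$ and $g\le h^*+12959$, both cases give length at most $(3151/1620+\varepsilon')h^*+K$, where $K$ is an absolute constant (roughly $12960$). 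Note that $1709/1620<3151/1620$.

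\textbf{Step 3 (absorbing the constant).} Take $\varepsilon'=\varepsilon/2$ and $N=\lceil 2K/\varepsilon\rceil$.
\begin{itemize}
\item If $h^*\ge N$, then $K\le(\varepsilon/2)h^*$. The length is at most $(3151/1620+\varepsilon)h^*\le(3151/1620+\varepsilon)\mathrm{OPT}$.
\item If $h^*<N$, the constructed cover has length $L=O(N)$, a constant depending only on $\varepsilon$, so $\mathrm{OPT}\le L$. We then find an optimum by exhaustive search. For each $k\le L$, assign to each radius $r\in\{0,\dots,k-1\}$ a center chosen from $\mathcal{P}$; this is at most $|\mathcal{P}|^{L}$ assignments, polynomial for fixed $\varepsilon$. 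Checking coverage is trivial. Unlike the anywhere case, no discretization of centers is needed, since centers already lie in the finite set $\mathcal{P}$.
\end{itemize}

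\textbf{Step 4 (running time and $L_1$).} The algorithm makes at most $|\mathcal{P}|$ guesses, each a polynomial PTAS call. The replacement of Lemma~\ref{lem:point-replacement-scheme} is constructive: it locates the nonempty cells and picks a point from each. Together with the search of Step 3, the running time is polynomial. For $L_1$, apply $\phi$, which is a bijection of $\mathbb{R}^2$ and maps input points to input points, so point-centeredness is preserved. Then map the centers back with $\phi^{-1}$. Finally, $3151/1620<1.9451$ gives the stated numerical form.

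The only delicate step is the bookkeeping in Step 2. The radii used by Lemma~\ref{lem:point-replacement-scheme}, which lie in $[2g/5,g)$, must be disjoint from those used for the individually covered squares, which are at least $g$. The case $m<89g/1620$ must also be handled so the lemma's hypothesis holds. Everything else is routine.
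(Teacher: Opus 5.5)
Your proposal is correct and follows essentially the same route as the paper's proof. The shared steps are the lower bound $h^*\le\mathrm{OPT}$ from the basic-algorithm argument, the case split on $m$ versus $89g/1620$ with Lemma~\ref{lem:point-replacement-scheme}, absorbing the additive constant when $h^*$ is large and using exhaustive search otherwise, and transferring to $L_1$ via $\phi$. You make explicit what the paper leaves terse ($\varepsilon'=\varepsilon/2$, the threshold $N$, the $|\mathcal{P}|^L$ enumeration). The one small slip is that the additive constant in the case $m<89g/1620$ is about $1709\cdot 12959/1620\approx 13671$ rather than $12960$, which does not affect the argument.
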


\newcommand{\Proofthmpointburning}{
\begin{proof}%[of Theorem~\ref{thm:point-burning}]
The described construction returns a point-centered distinct-radii square cover of length at most $(3151/1620+\varepsilon')h^*+O(1)$, where $h^*$ is a lower bound on the optimum point burning value. Choosing $\varepsilon'$ sufficiently small and handling bounded values of $h^*$ by exhaustive search gives the claimed approximation factor for the square-cover formulation, and hence for point burning under $L_\infty$. 
The result for $L_1$ follows from the transformation $\phi(x,y)=(x+y,x-y)$, which maps $L_1$-balls to axis-aligned squares while preserving the relevant distances. Since input points are mapped to input points, the point-center restriction is preserved.
\qed\end{proof}
}
%\Proofthmpointburning

\begin{remark}\label{remarkTwo}\emph{
%The higher-dimensional extension of the point-burning result is less direct when compared to anywhere-burning. 
For point burning, %the covering subproblem for a guessed radius $g$ is a discrete geometric set-cover problem: 
one must choose a minimum subset of the cubes $\{B_\infty(p,g):p\in\mathcal{P}\}$ that covers $\mathcal{P}$. Unlike the anywhere version, this problem is not covered by the PTAS of Hochbaum--Maass. % for freely placed rectilinear blocks.
This distinction is important in \emph{higher dimensions}. For example, in the Euclidean setting, the corresponding discrete covering problem with unit balls in $\mathbb{R}^3$ is APX-hard, and therefore does not admit a PTAS unless
$P=NP$~\cite{ChanGrant14}. For axis-aligned cubes in $\mathbb{R}^3$, similar APX-hardness results are also known~\cite{ChanGrant14}. %On the positive side, constant-factor approximation algorithms are known for covering points by unit cubes in $\mathbb{R}^3$~\cite{ClarksonVaradarajan07}. 
Thus, while higher-dimensional point burning variants may admit constant-factor approximations, we cannot claim a PTAS-based extension of our planar point burning algorithm.}
\end{remark}

\section{A Transfer to Planar \(L_p\) Metrics}
\label{sec:lp-transfer}

The algorithms in Sections~\ref{sec:anywhere} and
\ref{sec:point} use the square structure of the $L_\infty$ metric
and the equivalence between planar $L_1$-balls and axis-aligned squares.
Nevertheless, these algorithms imply approximation guarantees for all planar
\(L_p\) metrics by comparing $L_p$ distances. Moreover, for values of \(p\) close to
\(2\), better bounds are obtained by transferring the Euclidean burning
algorithms of~\cite{kamali2023improved}.

Throughout this section, we assume \(1\le p\le\infty\) and use the convention
\(1/\infty=0\). 
For a vector \(v=(a,b)\in\mathbb{R}^2\), we write
$\|v\|_p=(|a|^p+|b|^p)^{1/p}$ for \(1\le p<\infty\), and $\|v\|_\infty=\max\{|a|,|b|\}$. 
For two points \(u,c\in\mathbb{R}^2\), their \(L_p\)-distance is $\|u-c\|_p,$ 
where \(u-c\) is the displacement vector from \(c\) to \(u\). For a center
\(c\) and a radius \(r\), let $B_p(c,r)=\{u\in\mathbb{R}^2:\|u-c\|_p\le r\}$ denote the \(L_p\)-ball of radius \(r\) centered at \(c\).

The only geometric fact we need is that, in the plane, \(L_p\)-distances and
\(L_q\)-distances differ by at most a constant factor. For every vector
\(v\in\mathbb{R}^2\), if \(p\le q\), then
$\|v\|_q\le \|v\|_p.$ 
If \(p>q\), then the classical comparison inequality for finite-dimensional
\(L_p\) distances gives \[\|v\|_q\le 2^{1/q-1/p}\|v\|_p.\]
This inequality is a standard consequence of Hölder's inequality; see, e.g.,
Hardy, Littlewood, and Pólya~\cite[Chapter~II, Sections~2.7--2.8]{HardyLittlewoodPolya1952}.
The inequality above immediately implies the following observation about how
\(L_p\)-balls and \(L_q\)-balls compare.
\begin{observation}
\label{obs:lp-ball-containment}
For any \(1\le p,q\le\infty\), define $\lambda_{p,q}=2^{\max\{0,1/q-1/p\}}$
and $\mu_{p,q}=2^{\max\{0,1/p-1/q\}}$. 
Then, for every center \(c\) and radius \(r\),
$B_p(c,r)\subseteq B_q(c,\lambda_{p,q}r)$ 
and $B_q(c,r)\subseteq B_p(c,\mu_{p,q}r).$
Moreover, by the definitions of \(\lambda_{p,q}\) and \(\mu_{p,q}\), we have $\lambda_{p,q}\mu_{p,q}=2^{|1/p-1/q|}.$
\end{observation}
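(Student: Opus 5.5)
The plan is to derive both containments directly from the two norm comparison inequalities stated just above. I will split into the cases $p\le q$ and $p>q$, and then verify the product identity by computing the two exponents. Throughout I use that $B_p(c,r)$ depends only on the displacement $u-c$: a point $u$ lies in $B_p(c,r)$ exactly when $\|u-c\|_p\le r$. So each containment reduces to an inequality between $\|v\|_p$ and $\|v\|_q$ for the single vector $v=u-c$.

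For the first containment $B_p(c,r)\subseteq B_q(c,\lambda_{p,q}r)$, take $u$ with $\|v\|_p\le r$.
\begin{itemize}
\item If $p\le q$, then $1/q-1/p\le 0$, so $\lambda_{p,q}=1$. The monotonicity inequality gives $\|v\|_q\le\|v\|_p\le r$.
\item If $p>q$, then $\lambda_{p,q}=2^{1/q-1/p}$. The Hölder comparison gives $\|v\|_q\le 2^{1/q-1/p}\|v\|_p\le\lambda_{p,q}r$.
\end{itemize}

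The second containment $B_q(c,r)\subseteq B_p(c,\mu_{p,q}r)$ is the same statement with the roles of $p$ and $q$ swapped. Indeed, $\mu_{p,q}=2^{\max\{0,1/p-1/q\}}=\lambda_{q,p}$, so it follows from the first containment applied to the pair $(q,p)$.

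For the product, note that exactly one of $1/q-1/p$ and $1/p-1/q$ is nonnegative; if they are equal, both are zero. Hence
\[
\max\{0,1/q-1/p\}+\max\{0,1/p-1/q\}=|1/p-1/q|,
\]
which gives $\lambda_{p,q}\mu_{p,q}=2^{|1/p-1/q|}$.

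The only point needing care is the boundary convention when $p$ or $q$ equals $\infty$, using $1/\infty=0$. For example, the comparison $\|v\|_q\le 2^{1/q}\|v\|_\infty$ holds because $|a|^q+|b|^q\le 2\max\{|a|,|b|\}^q$. Likewise $\|v\|_\infty\le\|v\|_p$ is immediate, so the stated inequalities hold in these cases and no separate argument is needed.
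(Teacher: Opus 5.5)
Your proof is correct and follows the same route as the paper, which just states that the two norm-comparison inequalities immediately imply both containments. Your remarks that $\mu_{p,q}=\lambda_{q,p}$, that $\max\{0,x\}+\max\{0,-x\}=|x|$, and your check of the $1/\infty=0$ boundary case spell out steps the paper leaves implicit.
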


Figure~\ref{fig:lp-ball-comparison} illustrates the above observation for several
choices of \(p\) and \(q\).
In what follows, we define $\eta(p,q)=2^{|1/p-1/q|}.$

% fig here (Saba, if you could make it nicer by adjustment):
% https://www.mathcha.io/editor/3Wm5VceMi1zCpmuqNqvpdHlEyge4FGvz5evS9r29Mx
\begin{figure}[!t]
%    \centering
%    \includegraphics[width=0.5\linewidth]{}
\hspace{-.1cm}\scalebox{.95}{\boundFig}
    \caption{An illustration of Observation~\ref{obs:lp-ball-containment} for $p=6$ and $q\in \{1,2, \infty\}.$}
    \label{fig:lp-ball-comparison}
\end{figure}

We now use Observation~\ref{obs:lp-ball-containment} to transfer algorithms
between different planar \(L_p\) metrics. Intuitively, an \(L_p\)-burning
solution can be converted into an \(L_q\)-burning solution by enlarging each
ball by a factor \(\lambda_{p,q}\), and an \(L_q\)-solution can be converted
back to an \(L_p\)-solution by enlarging each ball by a factor \(\mu_{p,q}\).
Thus, the total loss in the approximation factor is
\(\lambda_{p,q}\mu_{p,q}=\eta(p,q)\).

\begin{lemma}[Transfer lemma]
\label{lem:norm-transfer}
Fix \(1\le p,q\le\infty\). Suppose there is a polynomial-time
\((\alpha_q+\varepsilon)\)-approximation algorithm for anywhere burning,
respectively point burning, under the planar \(L_q\) metric. Then, for every
fixed \(\varepsilon>0\), there is a polynomial-time $\left(\alpha_q\eta(p,q)+\varepsilon\right)$
-approximation algorithm for the corresponding burning problem under the
planar \(L_p\) metric.
\end{lemma}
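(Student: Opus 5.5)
The plan is to reduce an $L_p$ instance to an $L_q$ instance on the same point set, run the assumed $L_q$ algorithm, and convert its output back to $L_p$ by inflating every ball. All steps use only Observation~\ref{obs:lp-ball-containment}.

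\textbf{Lower bound on the $L_q$ optimum.} Let $\mathrm{OPT}_p$ and $\mathrm{OPT}_q$ denote the optimal burning lengths of $\mathcal{P}$ under $L_p$ and $L_q$. Take an optimal $L_p$ cover with balls $B_p(c_i,i)$, $i=0,\ldots,k-1$, where $k=\mathrm{OPT}_p$. Each such ball lies in $B_q(c_i,\lambda i)$, with $\lambda=\lambda_{p,q}$. These radii $\lambda i$ are not integers, so I will not build an $L_q$ burning sequence of length $\lambda k$ directly. Instead I will use the standard time-scaling argument: a ball of real radius $\rho$ fits inside the integer radius $\lceil \rho\rceil$. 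The radii $\lceil\lambda i\rceil$ are distinct because $\lambda\ge 1$, and they all lie in $\{0,\ldots,\lceil\lambda(k-1)\rceil\}$. Hence
\[
\mathrm{OPT}_q\le \lambda k+1.
\]
Centers are unchanged, so in the point version they remain input points.

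\textbf{Upper bound on the $L_p$ output.} Run the $L_q$ algorithm with a small $\varepsilon'$. It returns a cover by balls $B_q(c_j,j)$, $j<L$, where $L\le(\alpha_q+\varepsilon')\mathrm{OPT}_q$. Replace each of these balls by $B_p(c_j,\lceil\mu j\rceil)$ with $\mu=\mu_{p,q}$. By the observation this still covers $\mathcal{P}$. The radii stay distinct since $\mu\ge1$, and they are at most $\lceil\mu(L-1)\rceil$. The $L_p$ length is therefore at most $\mu L+1$. Combining the two bounds gives
\[
\mu L+1\le \mu(\alpha_q+\varepsilon')(\lambda\,\mathrm{OPT}_p+1)+1=\alpha_q\eta(p,q)\mathrm{OPT}_p+O(\varepsilon')\mathrm{OPT}_p+O(1),
\]
using $\lambda\mu=\eta(p,q)$. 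The centers are never moved, so point burning is preserved.

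\textbf{Main obstacle: the additive $O(1)$.} This constant depends only on $p,q,\alpha_q$. When $\mathrm{OPT}_p$ exceeds a suitable constant, choosing $\varepsilon'$ small absorbs both the $O(\varepsilon')$ term and the $O(1)$ term into $\varepsilon\,\mathrm{OPT}_p$. Otherwise, the output length is bounded by a constant, so $\mathrm{OPT}_p$ is too. In that case I would find an exact $L_p$ solution by exhaustive search. For point burning this is immediate, since centers range over $\mathcal{P}$ and only constantly many radii are involved. For anywhere burning under a general $L_p$ the candidate centers must be discretized. I would use the fact that a finite family of $L_p$ balls with fixed radii covering given points can be moved so that each center is determined by $O(1)$ input points. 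These are intersections of boundaries of $L_p$ balls of the relevant radii, or the center of a singleton, giving polynomially many candidates per radius. If such exact computation for irrational $p$ is problematic, I would instead note that for bounded $\mathrm{OPT}_p$ the baseline approach can be applied with an arbitrarily fine additive slack, but the discretization is the step I expect to need the most care.
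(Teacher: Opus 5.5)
Your proposal is correct and matches the paper's proof step for step. Like the paper, you show $\operatorname{OPT}_q\le\lambda_{p,q}\operatorname{OPT}_p+O(1)$ by inflating balls and rounding radii up (they stay distinct because $\lambda_{p,q}\ge 1$), pull the $L_q$ output back by the factor $\mu_{p,q}$, use $\lambda_{p,q}\mu_{p,q}=\eta(p,q)$, and remove the additive constant by exhaustive search when $\operatorname{OPT}_p$ is bounded. Your discretization of anywhere-burning centers covers a step the paper only asserts: pairwise boundary intersections plus input points work for strictly convex balls ($1<p<\infty$), and for $p\in\{1,\infty\}$ the grid $\{p_x\pm r\}\times\{p_y\pm r\}$ from the proof of Theorem~\ref{thm:seven-fourths} suffices, after applying $\phi$ for $L_1$.
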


\begin{proof}
We give the proof for one of the two variants; the argument is identical for
anywhere burning and point burning. In the point-burning variant, all
transformations preserve the original centers, so the center restriction is not
affected.

Let \(\operatorname{OPT}_p\) and \(\operatorname{OPT}_q\) denote the optimum
burning numbers under the \(L_p\) and \(L_q\) metrics, respectively. Suppose
first that there is an \(L_p\)-burning sequence of length \(k\). This is a covering of points by \(L_p\)-balls of radii \(0,1,\ldots,k-1\). Using Observation~\ref{obs:lp-ball-containment}, we can cover each ball
\(B_p(c,r)\) by a ball \(B_q(c,\lambda_{p,q}r)\) and rounding radii up to integers to achieve an \(L_q\)-burning sequence of length at most $
\lambda_{p,q}k+O(1).$
Note that, since \(\lambda_{p,q}\ge 1\), the integers
\(\lceil \lambda_{p,q}r\rceil\), for \(r=0,1,\ldots,k-1\), are distinct (so each ball is used at most once, as necessary for the burning protocol).
Therefore, we can write 
\begin{equation}
    \operatorname{OPT}_q \le \lambda_{p,q}\operatorname{OPT}_p+O(1).\label{eq:opt}
\end{equation}

Now run the assumed \(L_q\)-burning algorithm on the same point set. For a
sufficiently small internal error parameter $\varepsilon'$, it returns an \(L_q\)-burning
sequence of length at most $(\alpha_q+\varepsilon')\operatorname{OPT}_q$, that is, a covering of points with $L_q$ balls of radii $0,1, \ldots, (\alpha_q+\varepsilon')\operatorname{OPT}_q$.
By Observation~\ref{obs:lp-ball-containment}, each such \(L_q\)-ball of radius \(r\) is contained in an \(L_p\)-ball of
radius \(\mu_{p,q}r\). Rounding radii up again gives an \(L_p\)-burning
sequence of length at most
\[
\mu_{p,q}(\alpha_q+\varepsilon')\operatorname{OPT}_q+O(1)
\le
\mu_{p,q}(\alpha_q+\varepsilon')
(\lambda_{p,q}\operatorname{OPT}_p+O(1))+O(1).
\]
The first inequality follows from Equation~\ref{eq:opt}. Thus, for sufficiently large \(\operatorname{OPT}_p\), the approximation factor
is at most $\alpha_q\lambda_{p,q}\mu_{p,q}+\varepsilon
=
\alpha_q\eta(p,q)+\varepsilon,$ 
by choosing \(\varepsilon'\) sufficiently small. If \(\operatorname{OPT}_p\) is
bounded by a constant depending only on \(p,q,\varepsilon\), then an optimum
solution can be found by exhaustive search over a constant number of balls.
This is polynomial for fixed \(p,q\). This removes the additive constant and
completes the proof. \qed
\end{proof}

We now apply Lemma~\ref{lem:norm-transfer} with
\(q\in\{1,2,\infty\}\). For \(q=1\) and \(q=\infty\), our algorithms give a
\((7/4+\varepsilon)\)-approximation for anywhere burning and a
\((\rho_1+\varepsilon)\)-approximation for point burning, where $\rho_1=\frac{3151}{1620}.$ 
For \(q=2\), the Euclidean algorithms of~\cite{kamali2023improved} give an
\((11/6+\varepsilon)\)-approximation for anywhere burning and a
\((\rho_2+\varepsilon)\)-approximation for point burning, where
$\rho_2=1.944.$

\begin{theorem}
\label{thm:lp-transfer}
For every fixed \(1\le p\le\infty\) and every fixed \(\varepsilon>0\), there is
a polynomial-time algorithm with an approximation factor of
$
\min\Big\{
\frac{7}{4}\,2^{1/p},
\frac{11}{6}\,2^{|1/p-1/2|},
\frac{7}{4}\,2^{1-1/p}
\Big\}
+\varepsilon$
for anywhere burning under the planar \(L_p\) metric.
\end{theorem}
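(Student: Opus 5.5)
The plan is to apply the Transfer Lemma (Lemma~\ref{lem:norm-transfer}) three times, once for each $q\in\{1,2,\infty\}$, and to run the three resulting algorithms in parallel, returning the shortest burning sequence. A minimum of approximation factors is achieved by taking the best of several polynomial-time algorithms. The running time is still polynomial for fixed $p$ and $\varepsilon$, and the output length is at most each of the three guarantees times $\mathrm{OPT}_p$.

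For the individual terms, first take $q=1$ with $\alpha_1=7/4$, which is supplied by Theorem~\ref{thm:seven-fourths} through the $L_1$ equivalence. Then $\eta(p,1)=2^{|1/p-1|}=2^{1-1/p}$, since $1/p\le 1$. This yields the factor $\frac74 2^{1-1/p}+\varepsilon$.

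Next take $q=\infty$, again with $\alpha_\infty=7/4$ from Theorem~\ref{thm:seven-fourths}. Using the convention $1/\infty=0$, we get $\eta(p,\infty)=2^{1/p}$, which gives $\frac74 2^{1/p}+\varepsilon$.

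Finally take $q=2$ with $\alpha_2=11/6$ from the Euclidean anywhere-burning algorithm of~\cite{kamali2023improved}. Here $\eta(p,2)=2^{|1/p-1/2|}$, giving $\frac{11}{6}2^{|1/p-1/2|}+\varepsilon$.

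In each application I would invoke Lemma~\ref{lem:norm-transfer} with the same target $\varepsilon$. The lemma already handles the internal choice of $\varepsilon'$ and the additive constants, including exhaustive search when $\mathrm{OPT}_p$ is bounded. So the overall guarantee is the minimum of the three expressions plus $\varepsilon$.

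The only points needing care are bookkeeping:
\begin{itemize}
\item checking that the hypotheses of the transfer lemma hold for $q=2$, namely that the cited Euclidean result is an $(11/6+\varepsilon)$-approximation for every fixed $\varepsilon$;
\item checking that the endpoint cases $p=1$ and $p=\infty$ are consistent. When $p=q$ the factor $\eta=1$ and the lemma is trivial, so one could simply use the original algorithm directly.
\end{itemize}
I do not expect a genuine obstacle; the main step is correctly evaluating the exponents $|1/p-1/q|$ for the three choices of $q$.
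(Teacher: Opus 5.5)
Your proposal is correct and follows the paper's proof: apply the transfer lemma with $q=\infty$, $q=2$ and $q=1$, using $\alpha=7/4$ for $q\in\{1,\infty\}$ and $\alpha=11/6$ (the Euclidean algorithm) for $q=2$. The values $\eta(p,\infty)=2^{1/p}$, $\eta(p,2)=2^{|1/p-1/2|}$ and $\eta(p,1)=2^{1-1/p}$ are as you computed, and returning the shortest of the three outputs gives the stated minimum.
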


\begin{proof}
Apply Lemma~\ref{lem:norm-transfer} with \(q=\infty\), \(q=2\), and \(q=1\),
respectively. Since $\eta(p,\infty)=2^{1/p},$ $\eta(p,2)=2^{|1/p-1/2|},$ and $\eta(p,1)=2^{1-1/p}$, the three transferred approximation factors are respectively $\frac{7}{4}2^{1/p},$ $\frac{11}{6}2^{|1/p-1/2|},$ and $\frac{7}{4}2^{1-1/p}.$
Taking the best of the three algorithms gives the claimed bound. \qed
\end{proof}

\begin{theorem}
\label{thm:lp-transfer-point}
For every fixed \(1\le p\le\infty\) and every fixed \(\varepsilon>0\), there is
a polynomial-time ($\rho_p +\varepsilon)$-approximation algorithm for point burning under the planar $L_p$ metric, where
$\rho_p =
\min\Big\{
\rho_1\,2^{1/p},
\rho_2\,2^{|1/p-1/2|},
\rho_1\,2^{1-1/p}
\Big\}
$. Here $\rho_1={3151}/{1620}
\quad\text{and}\quad
\rho_2=1.944.$
\end{theorem}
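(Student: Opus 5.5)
The plan mirrors the proof of Theorem~\ref{thm:lp-transfer}, this time for the point-burning variant. I will apply the transfer lemma (Lemma~\ref{lem:norm-transfer}) three times, once for each base metric $q\in\{\infty,2,1\}$, and then run the resulting three polynomial-time algorithms and return the shortest burning sequence. The returned sequence is at most $\min$ of the three guaranteed bounds. Each constituent algorithm runs in polynomial time for fixed $p$ and $\varepsilon$, so the combined algorithm does too.

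\textbf{The three base algorithms.}
\begin{itemize}
\item For $q=\infty$ and $q=1$, Theorem~\ref{thm:point-burning} gives a $(\rho_1+\varepsilon)$-approximation for point burning under both $L_\infty$ and $L_1$. The $L_1$ case goes through the map $\phi$, which sends input points to input points, so the center restriction is respected.
\item For $q=2$, the Euclidean point-burning algorithm of~\cite{kamali2023improved} gives a $(\rho_2+\varepsilon)$-approximation.
\end{itemize}

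\textbf{Computing the transfer factors.}
\begin{itemize}
\item With $1/\infty=0$, we get $\eta(p,\infty)=2^{|1/p|}=2^{1/p}$.
\item $\eta(p,1)=2^{|1/p-1|}=2^{1-1/p}$, since $1/p\le 1$.
\item $\eta(p,2)=2^{|1/p-1/2|}$.
\end{itemize}
Multiplying each base ratio by its factor gives exactly the three terms in $\rho_p$.

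\textbf{Point-variant check in Lemma~\ref{lem:norm-transfer}.} Both conversions in that proof keep every center and only enlarge radii:
\begin{itemize}
\item $B_p(c,r)\subseteq B_q(c,\lambda_{p,q}r)$, used to bound $\mathrm{OPT}_q$;
\item $B_q(c,r)\subseteq B_p(c,\mu_{p,q}r)$, used to map the algorithm's output back to $L_p$.
\end{itemize}
So a point-centered solution stays point-centered in both directions, and the inequality $\mathrm{OPT}_q\le\lambda_{p,q}\mathrm{OPT}_p+O(1)$ holds between the two point-burning optima. The exhaustive-search fallback for bounded $\mathrm{OPT}_p$ is also easier here, since candidate centers are simply the input points.

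\textbf{Main obstacle.} The only real subtlety is that the Euclidean guarantee $\rho_2=1.944$ is used as a black box of the form ``$(\alpha_q+\varepsilon)$ for every fixed $\varepsilon$''. I will state explicitly that~\cite{kamali2023improved} provides it in this form, and that $\varepsilon$ is split among the three applications, by taking the internal parameter of each to be $\varepsilon$ itself, as Lemma~\ref{lem:norm-transfer} already allows.
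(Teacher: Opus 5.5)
Your proposal is correct and follows the paper's proof essentially verbatim. You apply Lemma~\ref{lem:norm-transfer} with $q\in\{\infty,2,1\}$, using Theorem~\ref{thm:point-burning} for $q\in\{1,\infty\}$ and the Euclidean point-burning algorithm of~\cite{kamali2023improved} for $q=2$, compute $\eta(p,q)$, and take the best of the three outputs; your added check that both ball-containment conversions keep every center fixed, so point-centered solutions stay point-centered, is the same observation the paper makes inside the proof of the transfer lemma.
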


\begin{proof}
Apply Lemma~\ref{lem:norm-transfer} with \(q=\infty\), \(q=2\), and \(q=1\),
respectively. For \(q=\infty\) and \(q=1\), we use the planar point-burning
algorithm of Theorem~\ref{thm:point-burning}, whose approximation factor is
\(\rho_1+\varepsilon\). For \(q=2\), we use the Euclidean point-burning
algorithm of~\cite{kamali2023improved}, whose approximation factor is
\(\rho_2+\varepsilon\). Since
$\eta(p,\infty)=2^{1/p},$ $\eta(p,2)=2^{|1/p-1/2|},$ and $\eta(p,1)=2^{1-1/p}$, the three transferred approximation factors are respectively $\rho_1 2^{1/p},$ $\rho_2 2^{|1/p-1/2|},$ and $\rho_1 2^{1-1/p}.$
Taking the best of the three algorithms gives the claimed value of \(\rho_p\).
\qed
\end{proof}

\begin{remark}
At the endpoints \(p=1\) and \(p=\infty\), Theorem~\ref{thm:lp-transfer}
recovers the \(7/4+\varepsilon\) anywhere-burning bound proved in
Section~\ref{sec:anywhere}. At \(p=2\), it recovers the
\(11/6+\varepsilon\) Euclidean anywhere-burning bound of~\cite{kamali2023improved}. Thus, the algorithm in this theorem interpolates between the square-based
algorithms for \(L_1,L_\infty\) and the disk-based algorithms for \(L_2\).
\end{remark}

\begin{remark}[The case \(0<p<1\)]
The results of this section are stated for \(1\le p\le\infty\), where the usual
\(L_p\) distance is a metric. For \(0<p<1\), the function \(L_p\) does not define a
metric: if \(e_1=(1,0)\) and \(e_2=(0,1)\), then $\|e_1+e_2\|_p=2^{1/p}>2=\|e_1\|_p+\|e_2\|_p,$
so the triangle inequality fails. This is the standard failure of Minkowski's
inequality below \(p=1\); see, e.g., Hardy, Littlewood, and
Pólya~\cite[Chapter~II]{HardyLittlewoodPolya1952}. Therefore, this range of $p$ does not fit the usual metric interpretation of the burning problem.

% Nevertheless, one may still define a distinct-radii covering problem using the sets
% $B_p(c,r)=\{u\in\mathbb{R}^2:\|u-c\|_p\le r\}.$
% For this covering problem, let $\beta_p=2^{1/p-1}.$
% Then, for every center \(c\) and radius \(r\), we can write $B_p(c,r)\subseteq B_1(c,r)\subseteq B_p(c,\beta_p r).$
% Hence the same proof as Lemma~\ref{lem:norm-transfer}, with \(L_1\) as the
% auxiliary metric, transfers any planar \(L_1\)-based approximation with an
% additional factor \(\beta_p\). In particular, the \(L_1\) algorithms in this
% paper give a
% \[
% \left(\frac{7}{4}\beta_p+\varepsilon\right)
% \]
% -approximation for the anywhere version and a
% \[
% \left(\rho_\square\beta_p+\varepsilon\right)
% \]
% -approximation for the point version of the corresponding \(L_p\)-ball-covering
% problem, where \(\rho_\square=3151/1620\).
\end{remark}

\begin{remark}[Higher dimensions]
The distance-comparison argument is not restricted to the plane. In
\(\mathbb{R}^d\), the same classical comparison inequality for
finite-dimensional \(L_p\) distances gives, for every vector
\(v\in\mathbb{R}^d\),
\[
\|v\|_q
\le
d^{\max\{0,1/q-1/p\}}\|v\|_p.
\]
This inequality is also a standard consequence of Hölder's inequality~\cite{HardyLittlewoodPolya1952}.
Consequently, the analogue of Observation~\ref{obs:lp-ball-containment} holds
in \(\mathbb{R}^d\) after replacing the factor \(2\) by \(d\). Thus,
transferring an \(\alpha_q\)-approximation under \(L_q\) to one under \(L_p\)
incurs a multiplicative loss of $d^{|1/p-1/q|}.$
In particular, Lemma~\ref{lem:norm-transfer} extends to every fixed dimension:
one only needs to replace the planar factor \(2^{|1/p-1/q|}\) by
\(d^{|1/p-1/q|}\).
The concrete bounds in this section are stated only in the plane because they
use planar ingredients: the representation of planar \(L_1\)-balls as
axis-aligned squares, the planar square-covering algorithms used for
\(L_\infty\), and the planar Euclidean disk-burning algorithms
of~\cite{kamali2023improved}.
\end{remark}

\section{Concluding Remarks}

We studied geometric burning under the \(L_1\) and \(L_\infty\) metrics. Using
the equivalence between planar \(L_1\)-balls and axis-aligned squares, we
reduced both settings to a common distinct-radii square-cover formulation. This
square structure yields approximation guarantees beyond the basic
\((2+\varepsilon)\)-framework: a \((7/4+\varepsilon)\)-approximation for
anywhere burning and a \((3151/1620+\varepsilon)\)-approximation for point
burning in the plane.
We also gave a black-box transfer lemma that
converts approximation algorithms between planar \(L_p\) metrics. Together,
these results show that geometric burning benefits substantially from the
structure of the underlying metric, while still admitting uniform guarantees
across the planar \(L_p\) family.

\bibliographystyle{abbrv}
\bibliography{Bibliography}

\newpage

\clearpage
\appendix

\phantomsection
\section{Deferred Proofs}\label{refappxxxxx}

This appendix contains proofs omitted from the main text. For ease of reference, we restate the relevant results before
presenting their proofs.

\setcounter{theorem}{\numexpr\getrefnumber{lem:d-cube-cover}-1\relax}
\begin{lemma}
\STATEMENTlemdcubecover
\end{lemma}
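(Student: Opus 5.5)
The plan is to give an explicit construction, generalizing the planar idea in Lemma~\ref{lem:quarter-cover} but using only the simpler ``corner subcube'' placement. Let \(Q=B_\infty(c,g)\) be any radius-\(g\) \(d\)-cube. Its coordinate midplanes through \(c\) split \(Q\) into \(2^d\) congruent closed subcubes of side length \(g\). Each subcube contains exactly one corner \(v\) of \(Q\), and it is exactly the cube \(B_\infty(v',g/2)\), where \(v'\) is the midpoint between \(c\) and \(v\).

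\textbf{Step 1: one cube per corner subcube.} Any available cube of radius \(r\ge g/2\) covers such a subcube once we center it at \(v'\), since \(B_\infty(v',g/2)\subseteq B_\infty(v',r)\). The closure of the subcubes then gives \(Q=\bigcup_v B_\infty(v',g/2)\). So \(2^d\) available cubes, each of radius at least \(g/2\), suffice to cover one radius-\(g\) cube. Centers are unrestricted in this anywhere setting, so there are no placement issues.

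\textbf{Step 2: counting the usable radii.} Since \(2^{d+1}\mid g\), the number \(g/2\) is an integer. The radii \(g/2,g/2+1,\ldots,g-1\) are all at most \(g-1\), so they are available, and there are exactly \(g/2\) of them. Because \(2^d\) divides \(g/2\), these radii can be partitioned into \(g/2^{d+1}\) disjoint groups of size \(2^d\), for example by consecutive blocks.

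\textbf{Step 3: assembling the cover.} We pair each group with a distinct target radius-\(g\) cube and apply Step 1. The groups are disjoint, so each available cube is used at most once, and \(g/2^{d+1}\) cubes are covered. Radii below \(g/2\) are simply discarded.

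I do not expect a real obstacle here. The only points to check carefully are that the corner subcubes are exactly radius-\(g/2\) balls around the quarter-points \(v'\), and that \(g/2\) is an integer. Both follow from the divisibility hypothesis.
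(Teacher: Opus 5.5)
Your proof is correct and follows the paper's argument essentially verbatim. You split $Q$ by its coordinate midplanes into $2^d$ side-$g$ corner subcubes, keep only the $g/2$ radii $g/2,\ldots,g-1$, partition them into $g/2^{d+1}$ groups of size $2^d$, and let each group cover one radius-$g$ cube. The only cosmetic difference is that you center each cube at the subcube's center $v'$ rather than placing it at the corresponding corner of $Q$; this works equally well.
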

\Prooflemdcubecover

\setcounter{theorem}{\numexpr\getrefnumber{thm:higher-d-anywhere}-1\relax}
\begin{theorem}
\STATEMENTthmhigherdanywhere
\end{theorem}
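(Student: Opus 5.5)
The plan is to run the algorithm of Subsection~\ref{subsec:higher-dim} with an internal parameter $\varepsilon'$ (to be fixed in terms of $\varepsilon$ and $d$) and split into two cases according to the size of $h^*$.

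\textbf{Lower bound.} Let $h^*$ be the first successful guess and $U$ the returned cover. I first establish $h^*\le\mathrm{OPT}$ exactly as in Section~\ref{sec:baseline}. By the stopping rule, $|U_{h^*-1}|>(1+\varepsilon')(h^*-1)$. Since the Hochbaum--Maass PTAS (Theorem~\ref{thm:hochbaum-maass}) satisfies $|U_{h^*-1}|\le(1+\varepsilon')\tau(h^*-1)$, this gives $\tau(h^*-1)>h^*-1$. Now suppose some burning sequence had length at most $h^*-1$. Inflating all of its cubes to radius $h^*-1$ would cover $\mathcal{P}$ with at most $h^*-1$ congruent cubes, a contradiction.

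\textbf{Upper bound on the returned solution.} Let $g$ be the least multiple of $2^{d+1}$ with $g\ge h^*$. If $|U|\ge g/2^{d+1}$, I invoke Lemma~\ref{lem:d-cube-cover} to cover $g/2^{d+1}$ of the inflated cubes using only radii below $g$. The remaining $|U|-g/2^{d+1}$ cubes are covered one each with the distinct radii $g,g+1,\dots$. This gives length at most $|U|+(1-2^{-(d+1)})g+1$. If instead $|U|<g/2^{d+1}$, the lemma alone gives length at most $g+1$. Substituting $|U|\le(1+\varepsilon')h^*$ and $g\le h^*+2^{d+1}-1$, both cases give length at most $(2-2^{-(d+1)}+\varepsilon')h^*+c_d$, where $c_d\le 2^{d+1}$ depends only on $d$.

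\textbf{Absorbing the additive constant.} Take $\varepsilon'=\varepsilon/2$ and set the threshold $H=\lceil 2c_d/\varepsilon\rceil$. If $h^*\ge H$, then $c_d\le(\varepsilon/2)h^*$, so the length is at most $(2-2^{-(d+1)}+\varepsilon)h^*\le(2-2^{-(d+1)}+\varepsilon)\mathrm{OPT}$.

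\textbf{Bounded case.} If $h^*<H$, the solution above has constant length $L$, so $\mathrm{OPT}\le L$ is also a constant. I then solve the problem exactly by brute force over $k\le L$. For each radius $r<k$, it suffices to consider centers in the grid $C_r=\prod_{j=1}^d\{p_j\pm r:p\in\mathcal{P}\}$. This restriction is valid because any cube of radius $r$ covering a subset of $\mathcal{P}$ can be translated, coordinate by coordinate, until some point of that subset lies on a face, without losing any covered point. The grid has size $(2|\mathcal{P}|)^d$, polynomial for fixed $d$. With a constant number of cubes, the enumeration takes $|\mathcal{P}|^{O(dL)}$ time, which is polynomial.

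The main obstacle is this bounded case: I must verify carefully that the translation argument is correct, so that restricting centers to $C_r$ loses nothing. Everything else is bookkeeping of constants. Polynomial running time overall follows because there are at most $|\mathcal{P}|$ guesses for $h$, each a single PTAS call.
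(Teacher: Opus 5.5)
Your proposal is correct and follows essentially the same route as the paper's proof. Both use the baseline lower bound $h^*\le\mathrm{OPT}$, then Lemma~\ref{lem:d-cube-cover} to cover $g/2^{d+1}$ of the inflated cubes using radii below $g$, absorb the $O_d(1)$ additive term when $h^*$ is large, and run an exhaustive search over the grid $\prod_j\{p_j\pm r\}$ when $h^*$ is bounded. You are more explicit than the paper in places: the concrete constant $c_d\le 2^{d+1}$, the threshold $H$, and the coordinatewise translation argument that justifies the discretized centers. All of these check out.
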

\Proofthmhigherdanywhere

\setcounter{theorem}{\numexpr\getrefnumber{thm:point-burning}-1\relax}
\begin{theorem}
\STATEMENTthmpointburning
\end{theorem}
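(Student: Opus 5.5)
The plan is to follow the template of Theorem~\ref{thm:seven-fourths}: run the construction of Subsection~\ref{subsec:point-analysis}, bound its output by $(3151/1620+\varepsilon')h^*+O(1)$, and then remove the additive constant.

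\textbf{Step 1: lower bound.} For each guess $h$, run the Mustafa--Ray PTAS (Theorem~\ref{thm:mustafa-ray}) with candidate centers $\mathcal{P}$, and let $h^*$ be the first guess with $|U|\le(1+\varepsilon')h^*$. The argument of Section~\ref{sec:baseline} then gives $h^*\le\mathrm{OPT}$. It applies verbatim in the point version: enlarging all squares of an optimal point-centered cover of length at most $h^*-1$ to radius $h^*-1$ keeps their centers in $\mathcal{P}$.

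\textbf{Step 2: upper bound on the constructed cover.} Round $h^*$ up to $g$, the least multiple of $12960$ with $g\ge h^*$. There are two cases.
\begin{itemize}
\item If $m=|U|<89g/1620$, cover the squares individually with radii $g,\dots,g+m-1$. The length is at most $g+m$.
\item Otherwise, apply Lemma~\ref{lem:point-replacement-scheme} to any $89g/1620$ squares of $U$, and cover the remaining ones individually with radii at least $g$. This gives a point-centered distinct-radii cover of length at most $m+1531g/1620+1$.
\end{itemize}
In both cases, using $m\le(1+\varepsilon')h^*$ and $g\le h^*+12959$, the length is at most $(3151/1620+\varepsilon')h^*+C$ for an absolute constant $C$. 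Every center used is either the center of a square of $U$ or an input point chosen inside a cell, so all centers lie in $\mathcal{P}$.

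\textbf{Step 3: removing the additive constant.} Choose $\varepsilon'=\varepsilon/2$.
\begin{itemize}
\item If $h^*\ge 2C/\varepsilon$, the length is at most $(3151/1620+\varepsilon)h^*\le(3151/1620+\varepsilon)\mathrm{OPT}$.
\item Otherwise, the constructed cover has constant length $L$, so $\mathrm{OPT}\le L$. Solve exactly by brute force: for each $k\le L$, try all assignments of centers in $\mathcal{P}$ to the $k$ radii. This is $|\mathcal{P}|^{O(L)}$ candidates, each checked in polynomial time. This case is easier than the anywhere one, since no discretization of centers is needed.
\end{itemize}
The running time is polynomial: there are at most $|\mathcal{P}|$ guesses, each requiring one PTAS call, plus polynomial replacement work.

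\textbf{Step 4: transfer to $L_1$.} Apply $\phi(x,y)=(x+y,x-y)$. It is an isometry from $L_1$ to $L_\infty$ and maps $\mathcal{P}$ onto $\phi(\mathcal{P})$, so point-centered covers correspond to point-centered covers, and mapping the centers back with $\phi^{-1}$ finishes the proof. I expect no real obstacle. The only care needed is the bookkeeping that keeps the additive constant independent of the input, and the observation that the brute force stays polynomial only because $L$ depends on $\varepsilon$ alone.
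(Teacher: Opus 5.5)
Your proposal is correct and follows the paper's route. You take the lower bound $h^*\le\mathrm{OPT}$ from the first successful Mustafa--Ray guess, round up to a multiple of $12960$, and apply Lemma~\ref{lem:point-replacement-scheme} to $89g/1620$ of the squares. You then use exhaustive search when $h^*$ is bounded and the map $\phi$ for $L_1$, which simply makes the paper's terse argument explicit.

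One small slip is in the case $m<89g/1620$. There the bound does not follow from $m\le(1+\varepsilon')h^*$ and $g\le h^*+12959$ alone, since these give only $(2+\varepsilon')h^*+O(1)$. You need the case hypothesis itself, which gives $g+m<1709g/1620\le(1709/1620)(h^*+12959)$, as the paper does.
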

\Proofthmpointburning

\end{document}